\newif\ifLIPICS
\newif\ifanonym %
\theoremstyle{plain}
\newtheorem{theorem}{Theorem}[section]
\newtheorem{lemma}[theorem]{Lemma}
\newtheorem*{claim*}{Claim}
\newtheorem{corollary}[theorem]{Corollary}
\newtheorem{proposition}[theorem]{Proposition}
\theoremstyle{definition}
\newtheorem{definition}[theorem]{Definition}
\theoremstyle{remark}
\newtheorem{remark}[theorem]{Remark}
\newtheorem{question}[theorem]{Question}
\DeclareMathOperator{\poly}{poly}
\DeclareMathOperator{\ddim}{ddim}
\DeclareMathOperator{\diam}{diam}
\DeclareMathOperator{\supp}{supp}
\newcommand{\tO}{\tilde{O}}
\newcommand{\RR}{\mathbb{R}}
\newcommand{\cX}{\mathcal{X}}
\newcommand{\cA}{\mathcal{A}}
\newcommand{\Pinit}{P_{\text{init}}}
\newcommand{\Pout}{P_{\text{out}}}
\def\compactify{\itemsep=0pt \topsep=0pt \partopsep=0pt \parsep=0pt}
\title{Lipschitz Decompositions of Finite $\ell_{p}$ Metrics}
    \author{Robert Krauthgamer}{Weizmann Institute of Science}{robert.krauthgamer@weizmann.ac.il}{https://orcid.org/0009-0003-8154-3735}
        {Work partially supported by the Israel Science Foundation grant \#1336/23,
          by the Israeli Council for Higher Education (CHE) via the Weizmann Data Science Research Center,
          and by a research grant from the Estate of Harry Schutzman.}
    \author{Nir Petruschka}{Weizmann Institute of Science}{nir.petruschka@weizmann.ac.il}{}{}
    \authorrunning{R. Krauthgamer and N. Petruschka} %
\author{Anonymous Authors}
\author{Robert Krauthgamer%
  \thanks{Work partially supported by the Israel Science Foundation grant \#1336/23,
    by the Israeli Council for Higher Education (CHE) via the Weizmann Data Science Research Center,
    and by a research grant from the Estate of Harry Schutzman.
    Part of this work was done while visiting the Simons Institute for the Theory of Computing.
    Email: \texttt{robert.krauthgamer@weizmann.ac.il}
  } 
  \qquad 
Nir Petruschka\thanks{
    Email: \texttt{nir.petruschka@weizmann.ac.il}
  } 
\\  Weizmann Institute of Science
}
\begin{document}

\maketitle

\begin{abstract}
Lipschitz decomposition is a useful tool in the design of efficient algorithms
involving metric spaces.
While many bounds are known for different families of finite metrics,
the optimal parameters for $n$-point subsets of $\ell_p$, for $p > 2$, 
remained open, see e.g.\ [Naor, SODA 2017]. 
We make significant progress on this question 
and establish the bound $\beta=O(\log^{1-1/p} n)$.
Building on prior work, we demonstrate applications of this result to two problems, 
high-dimensional geometric spanners and distance labeling schemes.
In addition, we sharpen a related decomposition bound for $1<p<2$,
due to Filtser and Neiman [Algorithmica 2022].
\end{abstract}

\section{Introduction}
The pursuit of approximating metric spaces by simpler structures has inspired
the development of fundamental concepts,
such as graph spanners~\cite{PU89a, PS89}
and low-distortion embeddings into various spaces~\cite{LLR95, Bartal96},
both of which have a wide range of algorithmic applications.
Many of these results, including for instance \cite{Bartal96, Rao99, FRT04, KLMN05, ALN08, Filtser24},
rely on various notions of decomposition of a metric space into low-diameter clusters,
and these decompositions are most often randomized.
One extensively studied notion,
see e.g.~\cite{CCGGP98, GKL03, FRT04, HIS13},
is \emph{Lipschitz decomposition} (also called \emph{separating decomposition}),
which informally is a random partition of a metric space into low-diameter clusters,
with a guarantee that nearby points are likely to belong to the same cluster. 

\begin{definition}
[Lipschitz decomposition \cite{Bartal96}]
\label{defn:(Lipschitz-Decomposition)}
Let $(X, \rho)$ be a metric space.
A distribution $\mathcal{D}$ over partitions of $X$ is called \emph{$(\beta, \Delta)$-Lipschitz} if 
\begin{enumerate} \compactify
\item for every partition $P \in \supp(\mathcal{D})$, all clusters $C \in P$
  satisfy $\diam(C) \leq \Delta$; and 
\item for all $x, y \in X$,
  \[
    \Pr_{P \in \mathcal{D}} [P(x) \neq P(y)]
    \leq \beta\cdot \tfrac{\rho(x, y)}{\Delta} ,
  \]
  where $P(z)$ denotes the cluster of $P$ containing $z \in X$
  and $\diam(C) := \sup_{x,y\in C} \rho(x,y)$.
\end{enumerate}
\end{definition}

Typical applications require such decompositions 
where $\Delta$ is not known in advance,
or even multiple values of $\Delta$ (say for every power of $2$). 
We naturally seek small $\beta$ and thus define
the (optimal) \emph{decomposition parameter of $(X,\rho)$} as 
\[
  \beta^*(X) := \inf_{\beta\ge1} \sup_{\Delta>0}
  \Big\{ \text{$X$ admits a $(\beta, \Delta)$-Lipschitz decomposition} \Big\}, 
\]
and we extend this to a family of metric spaces $\cX$,
by defining $\beta^*(\cX) := \sup_{X \in \cX} \beta^*(X)$. 

Obtaining bounds on the decomposition parameter of various metrics
(and families of metrics) is of significant algorithmic importance,
and we list in \Cref{table:(Lipschitz-Decompositions)} several known bounds.
One fundamental example where we know of (nearly) tight bounds
is the metric space $\ell_p^d$, for $p \geq 1$,
which stands for $\RR^d$ equipped with the $\ell_p$ norm. 
For $p\in[1,2]$, we have $\beta^*(\ell_{p}^{d})=\Theta(d^{1/p})$
due to~\cite{CCGGP98},
and for $p\in[2,\infty]$ we have $\beta^*(\ell_{p}^{d}) = \tilde{\Theta}(d^{1/2})$
due to~\cite{Naor17}
(see discussion therein about an incorrect claim made in~\cite{CCGGP98}).%
\footnote{Throughout, the notation $\tilde{O}(f)$ hides $\poly(\log f)$ factors,
and $O_\alpha(\cdot)$ hides a factor that depends only on $\alpha$.
}
Observe that an upper bound for $X=\ell_p^d$ immediately extends to all subsets of it,
implying in particular a bound for the family $\cX$ of all finite subsets of $\ell_p^d$.
These bounds depend on $d$, and are thus most suitable for low-dimensional settings.

We focus on finite metrics $X$, aiming to bound $\beta^*(X)$ in terms of $n=|X|$,
which is often useful in high-dimensional settings.
For instance, it is well-known that $\beta^*= \Theta(\log{n})$
the family of all $n$-point metric spaces \cite{Bartal96}.
To write this assertion more formally,
define $\beta^*_n(X) := \beta^*(\{X'\subseteq X:\ |X|=n\})$
and then the above asserts that $\beta^*_n(\ell_\infty) =\Theta(\log n)$,
where we used that every finite metric embeds isometrically in $\ell_\infty$. 
For the family of $n$-point $\ell_{2}$ metrics,
combining $\beta^*(\ell_{2}^{d}) = \tilde{\Theta}(\sqrt{d})$
with the famous JL Lemma~\cite{JL84} immediately yields $\beta^{*}_{n}(\ell_{2}) = O(\sqrt{\log n})$,
which is tight by~\cite{CCGGP98}. 
For $n$-point $\ell_{p}$ metrics, $1<p < 2$,
we have $\beta^{*}_{n}(\ell_{p}) = \frac{O(\log^{1/p}{n})}{p-1}$ due to \cite{LN03draft, Naor17},
nearly matching the lower bound of $\beta^{*}_{n}(\ell_{p}) = \Omega(\log^{1/p}{n})$ from \cite{CCGGP98}.
However, for $n$-point $\ell_p$ metrics, $p > 2$, to the best of our knowledge,
the only known upper bound is $\beta^{*}_n(\ell_p) = O(\log n)$,
obtained by trivially applying the results for general $n$-point metric spaces. 
The following question was raised by Naor~\cite[Question 1]{Naor17},
see also~\cite[Question 83]{Naor24}.

\begin{question}[\cite{Naor17}]
Is it true that for every $p \in (2, \infty)$, $\beta^{*}_{n}(\ell_{p}) = o(\log{n})$? More ambitiously, is it true that $\beta^{*}_{n}(\ell_{p}) = O_p(\sqrt{\log{n}})$? 
\end{question}

Our main result, in \cref{thm:(Lipschitz-Decompositions-in-lp)},
answers the first part of this question in the affirmative.
Additionally, we show in \cref{sec:(Stochastic-Decomposition)}
an analogous result for another notion of decomposability 
that was introduced in~\cite{FN22} (and we call capped decomposition)
and is particularly suited for high-dimensional geometric spanners.

\begin{table}[]
\begin{tabular}{|l|l|l|l|}
\hline
Family of Metrics                               & $\beta^{*}$ or $\beta^{*}_n$                              & Reference                                    & 
                                                                                                                                                      Comments
  \\ \hline\hline
  $\ell_{p}^{d}$ spaces $1 \leq p \leq 2$                                             & $\Theta(d^{1/p})$                 & \cite{CCGGP98}                                 &
  \\ \hline
  $\ell_{p}^{d}$ spaces $p \geq 2$                                                 & $\Tilde{\Theta}(\sqrt{d})$               & \cite{Naor17}              &  
  \\ \hline\hline
  finite metrics                                                         & $\Theta(\log{n})$              & \cite{Bartal96}                               &

  \\ \hline
  $\ell_2$ space (Euclidean)                                                         & $\Theta(\sqrt{\log{n}})$                  & \cite{CCGGP98}                                &
  \\ \hline
  $\ell_p$ spaces $1 \leq p \leq 2$                                             & $\Theta_{p}(\log^{1/p}{n})$                 & \cite{LN03draft, Naor17}                                 &
  \\ \hline
  $\ell_p$ spaces $p \geq 2$                                                 & $O(\log^{1-1/p}{n})$               & \cref{thm:(Lipschitz-Decompositions-in-lp)}                                &  
    conjectured $\beta^{*}_{n}=\Theta(\sqrt{\log{n}})$
  \\ \hline\hline
  doubling constant $\lambda$                                             & $\Theta(\log{\lambda})$                   & \cite{GKL03}                                    &
  \\ \hline
  $K_r$-minor-free graphs                                                    & $O(r)$                  & \cite{AGGNT14,Filtser19}                                  & 
    conjectured $\beta^{*}=\Theta(\log{r})$
  \\ \hline
  graphs with genus $g$                                                   & $\Theta(\log{g})$                         & \cite{LS10, AGGNT14}                                &
  \\ \hline
  graphs with treewidth $w$                                                   & $\Theta(\log{w})$                 & \cite{FFIKLMZ24}                                  &
  \\ \hline
\end{tabular}
\caption{Known bounds on the decomposition parameter of some important families of metrics.}
\label{table:(Lipschitz-Decompositions)}
\end{table}

\subparagraph*{Geometric Spanners. }
A \emph{spanner with stretch $t\ge 1$} (in short a \emph{$t$-spanner})
for a finite metric \( M = (X, \rho) \) is a graph \( G = (X, E) \),
that satisfies \( \rho(x, y) \leq \rho_G(x, y) \leq t \cdot \rho(x, y) \)
for all \( x, y \in X \),
meaning that the shortest-path distance \( \rho_G \) in the graph \( G \)
approximates the original distance \( \rho(x, y) \) within factor \( t \),
where by definition every edge \( \{u, v\} \in E \) has weight \( \rho(u, v) \).
Of particular interest are spanners that are \emph{sparse},
meaning they contain a small number of edges, ideally linear in $n=|X|$.
Another important parameter is the \emph{lightness} of a spanner,
defined as the total weight of its edges divided by the weight of a minimum spanning tree of $X$.
Clearly, the lightness is at least $1$. 
These spanners are called geometric because the input is a metric space
(rather than a graph).
They are natural and useful representations of a metric, and as such,
have been studied extensively, see the surveys \cite{EPP00, Zwick01, ahmed2020graph}.
Spanners for $n$-point metrics in low-dimensional spaces
(e.g., in fixed-dimensional Euclidean space or doubling metrics)
are well-studied and well-understood.
For instance, metrics with doubling dimension $\ddim$
admit $(1+\varepsilon)$-spanners
with near-optimal sparsity \( n (1/\varepsilon)^{O(\ddim)} \)
and lightness \( (1/\varepsilon)^{O(\ddim)} \)~\cite{BLW19, LS23}.

However, in high-dimensional spaces, our understanding of spanners is rather limited.
Har-Peled, Indyk, and Sidiropoulos~\cite{HIS13} showed that
every \( n \)-point Euclidean metric admits,
an \( O(t) \)-spanner with \( \tilde{O}(n^{1+1/t^2}) \) edges for every $t\geq1$.
Filtser and Neiman~\cite{FN22} extended this result to all metric spaces
that admit a certain decomposition that we call capped decomposition (\cref{def:capped}),
showing that in those spaces, it is possible to construct spanners that are both sparse and light.
In particular, they showed that every \( n \)-point subset of \( \ell_{p} \), \( 1 < p \leq 2 \),
has an \( O(t) \)-spanner with \( n^{1+\tilde{O}(1/t^p)} \) edges and lightness \( n^{\tilde{O}(1/t^p)} \) for every $t\ge1$.
It remained open whether the spaces $\ell_p$ for \( p \in (2, \infty) \)
admit the aforementioned capped decomposition.
To the best of our knowledge, all known spanners for these spaces
have a tradeoff of stretch $O(t)$ with sparsity $O(n^{1+1/t})$.

\subsection{Our Results}
\label{sec:results}

Our main contribution is the construction of a Lipschitz decomposition
for finite \(\ell_p\) metrics, \(p \geq 2\), as follows. 

\begin{theorem}\label{thm:(Lipschitz-Decompositions-in-lp)}
Let $p \in [2,\infty]$. 
Then $\beta^*_n(\ell_p) = O(\log^{1-1/p}{n})$. 
That is, for every $n$-point metric $X\subset \ell_{p}$ and $\Delta > 0$, there exists an $(O(\log^{1-1/p}{n}), \Delta)$-Lipschitz decomposition of $X$.
\end{theorem}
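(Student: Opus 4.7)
The plan is to construct a random partition in the Calinescu--Karloff--Rabani (CKR) framework and then sharpen the standard analysis using $\ell_p$-specific inequalities. Concretely, I would draw a uniformly random permutation $\sigma$ of $X$ together with a random radius $R$ from an appropriate distribution on $[\Delta/4,\Delta/2]$, and assign each point $x\in X$ to the cluster of $\sigma(i)$ for the smallest index $i$ with $\rho(\sigma(i),x)\le R$. Every cluster is then contained in a ball of radius $R\le\Delta/2$, so its diameter is at most $\Delta$, and the only remaining task is to bound the pairwise cutting probability.

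For a pair $x,y$ with $\rho(x,y)=\delta$, list the points of $X$ by increasing distance from $x$ as $z_1,z_2,\dots,z_n$. A standard CKR calculation yields
\[
    \Pr[P(x)\neq P(y)] \;\le\; \sum_{k=1}^n \frac{1}{k}\cdot\frac{|I_k|}{\Delta/4},
\]
where $I_k\subseteq[\Delta/4,\Delta/2]$ is the set of radii $R$ at which $z_k$ separates $(x,y)$; the trivial estimate $|I_k|\le\delta$ yields only the generic $O(\log n\cdot\delta/\Delta)$. The plan is to improve this by exploiting that, on average over $k$, the lengths $|I_k|$ are much smaller than $\delta$ whenever the ambient metric is $\ell_p$ with $p\ge 2$.

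My approach is a two-step analysis. First, partition the points into dyadic levels according to the rank $k\in[2^j,2^{j+1})$ and, within each level, use the $2$-uniform smoothness of $\ell_p$ (i.e., the inequality $\|u+v\|_p^2+\|u-v\|_p^2\le 2\|u\|_p^2+2(p-1)\|v\|_p^2$) to control a suitable $p$-th moment of the cut lengths $|I_k|$. Second, apply H\"older's inequality across levels with exponents $p$ and $p/(p-1)$. A careful bookkeeping should yield $\sum_k |I_k|/k = O(\delta\cdot \log^{1-1/p} n)$, which matches Naor's $O(\sqrt{\log n})$ bound at $p=2$ and degrades gracefully to $O(\log n)$ as $p\to\infty$.

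The main obstacle is establishing the $\ell_p$-specific moment estimate for $\{|I_k|\}_k$: namely, that the set of centers $z\in X$ that split a fixed pair $(x,y)$ is ``low-complexity'' in $\ell_p$ in a quantitative sense. Such a bound fails in general metrics (and for $\ell_\infty$ the $\log n$ barrier is essentially tight), so the argument must use $p<\infty$ critically. Extracting the correct exponent $1-1/p$ from the $2$-smoothness inequality, and dovetailing it with the per-level CKR estimates, is where I expect the main technical content of the proof to lie.
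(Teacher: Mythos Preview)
Your proposal has a genuine gap at exactly the point you flag as ``the main obstacle.'' You want a $p$-th moment estimate on the cut lengths $|I_k|=|\rho(z_k,x)-\rho(z_k,y)|$ and you plan to extract it from the $2$-uniform smoothness inequality $\|u+v\|_p^2+\|u-v\|_p^2\le 2\|u\|_p^2+2(p-1)\|v\|_p^2$. But this is a \emph{quadratic} inequality whose only $p$-dependence is in the constant $p-1$; there is no evident mechanism by which it produces a bound of the form $\sum_k|I_k|^p\le C\delta^p\cdot(\text{something})$ with the right dependence on $p$. If anything, a smoothness-based argument would naturally output $O_p(\sqrt{\log n})$ rather than $O(\log^{1-1/p}n)$, and proving the former would already resolve the stronger half of Naor's question, which the paper leaves open. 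More concretely, the $|I_k|$ are $1$-Lipschitz images of the centers $z_k$ and in a CKR scheme the centers are the \emph{data points themselves}; nothing in your sketch prevents an adversary from placing many points $z_k$ near the bisector of $x,y$ so that $|I_k|\approx\delta$ for $\Theta(n)$ values of $k$, which would make the per-level H\"older step collapse back to $O(\delta\log n)$. You never say what structural feature of finite $\ell_p$ sets rules this out.

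The paper takes a completely different route and, notably, is \emph{data-dependent} rather than a direct CKR analysis. It runs a two-stage decomposition: first apply Bartal's generic $(O(\log n),\ \log^{1/p}n\cdot\Delta/4)$-Lipschitz decomposition to cut $X$ into pieces of diameter $O(\log^{1/p}n\cdot\Delta)$; then embed each piece into $\ell_2$ via the (scaled) Mazur map, whose H\"older-type contraction bound is tight enough on bounded-diameter sets; finally apply the known $(O(\sqrt{\log n}),\cdot)$-Lipschitz decomposition inside $\ell_2$ and pull back. The exponent $1-1/p$ arises not from a moment inequality but from balancing the two stages: the outer separation probability contributes $O(\log n)/\log^{1/p}n=O(\log^{1-1/p}n)$, and the inner stage, after accounting for the Mazur distortion, contributes the same. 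So the mechanism is an embedding-and-compose argument, not a refined CKR counting, and in particular it sidesteps entirely the moment estimate you would need.
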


Previously, this bound was known only for the extreme values $p=2,\infty$,
and in these cases it is actually tight. 
More precisely, for \( p = 2 \) our bound coincides with the well-known result
\( \beta^{*}_{n}(\ell_2) = \Theta(\sqrt{\log{n}}) \)~\cite{CCGGP98},
and for \( p = \Omega(\log{n}) \)
it is known that \( \beta^{*}_{n}(\ell_p) = \Theta(\log{n}) \),
because all $n$-point metrics embed into $\ell_p$ with $O(1)$-distortion~\cite{Matousek97}.
For intermediate values, say fixed \( p \in (2, \infty) \),
our bound is the first one to improve over $O(\log n)$,
which applies to all $n$-point metrics, 
and leaves a gap from the \( \Omega(\sqrt{\log{n}}) \) lower bound
that follows from Dvoretzky's Theorem~\cite{Dvoretzky61}. 
We compare our bound with those for other metric spaces 
in \cref{table:(Lipschitz-Decompositions)}.

The proof of \cref{thm:(Lipschitz-Decompositions-in-lp)} 
appears in \cref{sec:Lipschitz-Decomposition-p-greater-than-2},
and has interesting technical features. 
It relies on \emph{two} known decompositions of finite metrics,
one for general metrics and one for Euclidean metrics, 
that are composed via a metric-embedding tool called the Mazur map. 
Our decomposition method is data-dependent, i.e., not oblivious to the data,
and we discuss this intriguing aspect
in \cref{sec:Lipschitz-Decomposition-p-greater-than-2,sec:conclusion}.

\subparagraph*{Geometric Spanners for $p\ge 2$. }

We then use similar ideas %
to obtain a new bound for another notion of decomposability,
that was introduced in~\cite{FN22} and we call capped decomposition;
and this immediately yields geometric spanners in $\ell_p$, for $p\ge 2$.
While for $p=2$ these spanners coincide with the known bounds from~\cite{HIS13, FN22},
for fixed $2<p<\infty$, our spanners are the first improvement
over the trivial bounds that hold for all metric spaces. 

\begin{theorem}
\label{thm:(Spanner-In-l_p-p-greater-than-2)}
Let \( p \in [2, \infty) \) and \( t \geq 1 \).
Then every \( n \)-point metric \( X \subset \ell_{p} \) admits an \( O(t) \)-spanner of size \( \tO\left(n^{1 + 1/t^{q}}\right) \) and lightness \( \tO\left(n^{1/t^{q}}\right) \),
where \( q \in (1, 2) \) is such that \( \frac{1}{p} + \frac{1}{q} = 1 \).
\end{theorem}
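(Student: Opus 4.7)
The strategy is to invoke the Filtser–Neiman framework~\cite{FN22}, which converts a capped decomposition (\cref{def:capped}) with suitable parameters into a spanner of the stated sparsity and lightness. The task thus reduces to establishing, for every $n$-point subset of $\ell_p$ with $p \geq 2$, a capped decomposition whose parameters parallel those that~\cite{FN22} proved for $\ell_p$ in the range $1 < p \leq 2$, but with the conjugate exponent $q$ in place of $p$. Once such a decomposition is available, the size bound $\tO(n^{1+1/t^q})$ and lightness bound $\tO(n^{1/t^q})$ come out directly from their reduction.

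Establishing the capped decomposition is the heart of the proof, and my plan is to adapt the approach underlying \cref{thm:(Lipschitz-Decompositions-in-lp)}. Given a scale $\Delta$, I would rescale so that the relevant cluster lies near the unit $\ell_p$-sphere, push it into $\ell_2$ via the Mazur map $M_{p,2}$ (which for $p \geq 2$ is Lipschitz in this direction and $(2/p)$-H\"older in the other), invoke a capped decomposition for $\ell_2$ (either the one in~\cite{FN22} or one obtained from the $O(\sqrt{\log n})$ Euclidean Lipschitz decomposition of~\cite{CCGGP98}), and then pull the resulting partition back to $\ell_p$. The H\"older exponent $2/p$ converts the Euclidean padding exponent, which in~\cite{HIS13, FN22} yields the $\tO(n^{1+1/t^2})$ spanner for $\ell_2$, into the required $t^{-q}$-type padding for $\ell_p$, since the duality $1/p + 1/q = 1$ turns the exponent $2/p$ into an $n^{1/t^q}$-scaling after the FN22 calibration.

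With the capped decomposition in hand, the spanner construction is a black-box application of~\cite{FN22}: apply the decomposition at geometrically increasing scales, add short connecting edges inside each cluster, and bound stretch, size, and lightness from the decomposition parameters. The main obstacle I anticipate is the careful calibration of the padding radius through the non-Lipschitz Mazur map. Preserving the padding guarantee costs an extra $O(\log^{1-1/p} n)$ factor, completely analogous to the loss appearing in the proof of \cref{thm:(Lipschitz-Decompositions-in-lp)}, and I must verify that this factor is absorbed into the $\tO$ notation without spoiling the leading exponent $1/t^q$ in the final size and lightness bounds. Handling the data-dependent flavor of the decomposition (already flagged in the discussion following \cref{thm:(Lipschitz-Decompositions-in-lp)}) within the recursive FN22 construction is the secondary technical point, but it should follow by applying the decomposition independently at each recursive level, since each level sees at most $n$ points.
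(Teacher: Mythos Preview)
Your plan is essentially the paper's own route: prove a $(t,\, n^{-O(1/t^{q})})$-capped decomposition for finite $\ell_p$ sets (this is \cref{thm:(Decomposability-Of-lp)}) and then apply the FN22 machine (\cref{thm:(Decomposable-Metrics-Spanner)}). The paper's capped decomposition is exactly the capped analogue of the proof of \cref{thm:(Lipschitz-Decompositions-in-lp)}: first a general-metric capped decomposition with parameter $t_1 \approx t^{q}$ (from \cite{MN07}, \cref{prop:(Decomposability-Of-General-Metrics)}) to break $X$ into pieces of diameter $\approx \Delta\, t^{q-1}$, then Mazur-map each piece into $\ell_2$, then apply the $\ell_2$ capped decomposition with parameter $t_2 \approx t^{q/2}$ (\cref{prop:(Decomposability-Of-l2)}), and pull back.

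Two small points where your sketch drifts. First, ``rescaling so that the relevant cluster lies near the unit sphere'' is not enough by itself; you must explicitly run the preliminary general-metric capped decomposition so that the ratio of cluster diameter to padding scale $\Delta/t$ is forced down to $t_1 = O(t^{q})$, because the Mazur map's H\"older constant depends on exactly this ratio (without this step the pulled-back diameter bound blows up). Second, no $O(\log^{1-1/p} n)$ factor appears or needs to be absorbed in the capped argument: that factor is the Lipschitz $\beta$, but here the first step is already calibrated so that $\eta_1 = n^{-O(1/t_1)} = n^{-O(1/t^{q})}$ and the $\ell_2$ step gives $\eta_2 = n^{-O(1/t_2^{2})} = n^{-O(1/t^{q})}$, whence $\eta = \eta_1\eta_2 = n^{-O(1/t^{q})}$ directly. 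The $q$ in the exponent thus comes out of the parameter choices $t_1 = t^{q}$ and $t_2 = t^{q/2}$, not from a separate duality computation.
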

The proof of this theorem appears in \cref{sec:capped-decomposition},
and includes both the spanner construction, which follows~\cite{FN22},
and our new bound for capped decomposition, which is the main technical result.

\subparagraph*{Geometric Spanners for $p\le 2$. }
We also sharpen the known spanner results for \(\ell_p\) spaces with $1<p<2$, 
which say that every \(n\)-point subset admits an \(O(t)\)-spanner 
with \(n^{1 + O(\log^2{t}/t^p)}\) edges and lightness \(n^{O(\log^2{t}/t^p)}\) for every \(t \geq 1\)~\cite{FN22}. 
We improve upon this result by eliminating the \(\log^2{t}\) factor in the exponent.

\begin{theorem}
\label{thm:(Spanner-In-l_p-p-less-than-2)}
Let \( p \in (1, 2] \) and \( t \geq 1 \).
Then every \( n \)-point metric \( X \subset \ell_{p} \) admits
an \( O(t) \)-spanner of size \( \tO(n^{1 + 1/t^{p}}) \)
and lightness \( \tO(n^{1/t^{p}}) \).
\end{theorem}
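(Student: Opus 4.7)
My plan is to combine the spanner construction of Filtser--Neiman~\cite{FN22}, which derives sparse and light spanners from a capped decomposition of the input metric, with a sharper capped decomposition for $\ell_p$ spaces with $1 < p \leq 2$. The extra $\log^2 t$ factor in the exponent of~\cite{FN22} originates from a suboptimal capped-decomposition bound they use for this range of $p$; we improve this bound to essentially match (up to constants depending on $p$) the Lipschitz decomposition bound $\beta^*_n(\ell_p) = \Theta_p(\log^{1/p} n)$ of~\cite{LN03draft, Naor17}.

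The first step is to recall the reduction of~\cite{FN22} from capped decompositions (\cref{def:capped}) to $t$-spanners: if $X$ admits a capped decomposition with parameter $\beta$ at all scales, then $X$ admits an $O(t)$-spanner of size $\tO(n^{1 + O(\beta^{p}/t^{p})})$ and lightness $\tO(n^{O(\beta^{p}/t^{p})})$. In~\cite{FN22} this is invoked with a bound $\beta = O(\log^{1/p} n \cdot \polylog(t))$, which is what produces the $\log^2 t$ factor in their exponent.

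The second step, which is the main technical one, is to construct a capped decomposition for $n$-point subsets of $\ell_p$ with $1 < p \leq 2$ whose parameter is $\beta = O_p(\log^{1/p} n)$, i.e., with no dependence on $t$. I would start from the Lee--Naor / Naor Lipschitz decomposition~\cite{LN03draft, Naor17}, which already achieves the optimal Lipschitz parameter, and argue that with appropriate control of the random radius used in that construction, the capping/padding property required by~\cite{FN22} is satisfied essentially ``for free'', rather than through an additional union bound over $O(\log t)$ independent trials. This parallels the treatment of capped decompositions for $p \geq 2$ developed in \cref{sec:capped-decomposition}, so several lemmas from that section should transfer almost verbatim.

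The third step is routine: plug the improved capped decomposition into the spanner construction of step one. Setting the scales geometrically as in~\cite{FN22} and substituting $\beta = O_p(\log^{1/p} n)$ into their size/lightness analysis directly yields size $\tO(n^{1 + 1/t^p})$ and lightness $\tO(n^{1/t^p})$. The main obstacle is in step two: verifying that the Lee--Naor construction satisfies the capping condition of~\cite{FN22} without losing a $\log t$ factor. This boils down to a tail analysis of the random radius distribution used in that construction, ensuring that the padding property holds with the same exponent as the separation property, rather than after boosting.
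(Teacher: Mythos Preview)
Your high-level framework agrees with the paper's: obtain a $(t,\,n^{-O(1/t^p)})$-capped decomposition for $n$-point subsets of $\ell_p$, $1<p\le 2$, and feed it into the spanner construction of~\cite{FN22} (\cref{thm:(Decomposable-Metrics-Spanner)}). The divergence is entirely in how the capped decomposition is obtained. The paper does \emph{not} analyze the Lee--Naor Lipschitz decomposition. Instead it goes through LSH: compose the isometric embedding of the $(p/2)$-snowflake of $\ell_p$ into $\ell_2$ from~\cite{Kal08} with the Andoni--Indyk $\ell_2$ LSH family~\cite{AI06}; this yields (as noted in~\cite{AndoniIndyk}) an $(r,tr,p_1,p_2)$-sensitive family for $\ell_p$ with parameter $\gamma = 1/t^p + o(1)$, and the LSH-to-capped lemma of~\cite{FN22} (\cref{lem:LSH-implies-capped-decomposition}) immediately gives the required $(t,\,n^{-O(1/t^p)})$-capped decomposition. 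In other words, the entire improvement is a one-line substitution of~\cite{AndoniIndyk} for~\cite{Ngu13} inside the~\cite{FN22} pipeline.

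Your step two, by contrast, has a real gap. First, your diagnosis of the $\log^2 t$ loss is off: in~\cite{FN22} it comes from the LSH parameter $\gamma = O(\log^2 t/t^p)$ of Nguyen's construction~\cite{Ngu13}, not from an $O(\log t)$-fold boosting of an otherwise good decomposition. Second, and more importantly, the Lipschitz guarantee $\beta = O_p(\log^{1/p} n)$ only tells you that points at distance $\le \Delta/t$ remain together with probability $\ge 1-\beta/t$; this is vacuous once $t\lesssim\beta$ and in no regime produces $\eta = n^{-O(1/t^p)}$. Getting that $\eta$ out of the Lee--Naor random-radius scheme would require a genuinely new tail/padding analysis that you have not sketched, and it does not ``parallel'' \cref{sec:capped-decomposition}: that section composes two already-known \emph{capped} decompositions via the Mazur map and performs no tail analysis of any radius distribution. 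So as written, the plan is missing precisely the ingredient that does the work.
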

The proof of this theorem, presented in \cref{sec:(Geo_Spanners_High_Dim)}, follows the construction of \cite{FN22},
but replacing a key step, in which they rely on results from \cite{Ngu13}, 
with results from~\cite{AndoniIndyk}. 
Interestingly, our improved spanner bound ``matches'' the bounds of \cref{thm:(Spanner-In-l_p-p-greater-than-2)},
up to duality between $p$ and $q$.

\subparagraph*{Distance Labeling Schemes. }  
Distance labeling for a metric space $(X, \rho)$
assigns to each point $x\in X$ a label $l(x)$,
so that one can later recover (perhaps approximately)
the distance between any two points in $X$
based only on their labels (without knowledge of the metric space).
It was formulated in~\cite{Peleg00},
motivated by applications in distributed computing,
and has been studied intensively, see e.g.~\cite{GPPR04, FGK24}.
An immediate corollary of our main result in \cref{thm:(Lipschitz-Decompositions-in-lp)}
is a distance labeling scheme for finite metrics in \(\ell_p\) for \(p > 2\),
as follows.

\begin{theorem}
\label{thm:(Distance-Labeling-In-l_p-p_greater-than-2)}
Let \( p \in (2, \infty) \).
Then the family of \( n \)-point metrics in \( \ell_{p} \)
with pairwise distances in the range $[1,\Delta_{\text{max}}]$
admits a distance labeling scheme with approximation $O(\log^{1/q}{n})$
and label size $O(\log{n}\log{\Delta_{\text{max}}})$ bits,
where \( q \in (1, 2) \) is such that \( \frac{1}{p} + \frac{1}{q} = 1 \).
\end{theorem}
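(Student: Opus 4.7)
The plan is to derive the distance labeling scheme from \cref{thm:(Lipschitz-Decompositions-in-lp)} via the standard hierarchical partition reduction: sample Lipschitz decompositions at every dyadic scale in $[1,\Delta_{\max}]$ and encode a point's cluster identifiers as its label.

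Concretely, set $\beta = O(\log^{1-1/p} n) = O(\log^{1/q} n)$ as guaranteed by \cref{thm:(Lipschitz-Decompositions-in-lp)}. For each scale $\Delta_i = 2^i$ with $i = 0, 1, \ldots, \lceil \log_2 \Delta_{\max} \rceil$, sample $m = \Theta(\log n)$ independent $(\beta, \Delta_i)$-Lipschitz decompositions $P_i^{(1)}, \ldots, P_i^{(m)}$ of $X$. Assign every cluster an $O(\log n)$-bit identifier, and let the label of $x$ record the cluster identifiers $P_i^{(j)}(x)$ across all scales $i$ and samples $j$. To answer a query on $x,y$, the decoder scans scales from small to large, locates the smallest $i^\ast$ at which $P_{i^\ast}^{(j)}(x) = P_{i^\ast}^{(j)}(y)$ holds for some $j$, and outputs $\widehat{\rho}(x,y) := \Delta_{i^\ast}$.

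The approximation analysis has two sides. The upper bound $\rho(x,y) \leq \widehat{\rho}(x,y)$ is immediate, because the common cluster at scale $i^\ast$ has diameter at most $\Delta_{i^\ast}$. For the lower bound, at scale $i^\ast - 1$ the pair is separated in every one of the $m$ decompositions; by the Lipschitz guarantee, each single-decomposition separation has probability at most $\beta\rho(x,y)/\Delta_{i^\ast -1}$, so whenever $\rho(x,y) \leq \Delta_{i^\ast -1}/(2\beta)$ the joint event has probability at most $2^{-m} = n^{-\Omega(1)}$. A union bound over the $O(n^2 \log \Delta_{\max})$ (pair, scale) events yields, with positive probability, a realization of the decompositions for which $\rho(x,y) > \Delta_{i^\ast -1}/(2\beta) = \widehat{\rho}(x,y)/(4\beta)$ holds uniformly for every pair, producing the claimed $O(\beta) = O(\log^{1/q} n)$ approximation.

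The real content is \cref{thm:(Lipschitz-Decompositions-in-lp)} itself; the remainder is a standard reduction. I expect the only bookkeeping subtlety to be matching the stated label size of $O(\log n \log \Delta_{\max})$ bits rather than the cruder $O(\log^2 n \log \Delta_{\max})$ produced by naively storing all $m$ cluster identifiers at every scale. This can be tightened by packing, at each scale, the $m$ per-point cluster identifiers into a single $O(\log n)$-bit refinement identifier (the common refinement of the $m$ sampled partitions has at most $n$ non-empty classes), at the mild cost of endowing the decoder with a succinct auxiliary table that recovers, from two refinement identifiers at a given scale, whether they are contained in a common cluster of some sample; this is the natural place where care is required in a full write-up.
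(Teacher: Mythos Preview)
Your high-level reduction is exactly the paper's: combine \cref{thm:(Lipschitz-Decompositions-in-lp)} with a generic hierarchical-decomposition labeling scheme (this is \cref{thm:(Distance-Labeling-In-Decomposable-Metrics)}), so the approximation analysis is fine. The issue is the label size, and your proposed fix does not work in the model as defined.

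The paper achieves $O(\log n\log\Delta_{\max})$ bits by a different encoding: instead of recording cluster identifiers, each cluster is assigned a single \emph{random bit}, and the label of $x$ at scale $i$ is simply the $k=O(\log n)$ bits it receives across the repetitions. The decoder counts, at each scale, how many of the $k$ bits agree between $l(x)$ and $l(y)$: if $\rho(x,y)\le \Delta_i/(4\beta)$ the pair is separated with probability at most $1/4$, hence agrees with probability at least $3/4$; if $\rho(x,y)>\Delta_i$ the pair is always separated and agrees with probability exactly $1/2$. A Chernoff bound around the threshold $5k/8$ then plays the role of your ``some sample'' test. This yields one bit per (scale, repetition), i.e.\ $O(\log n\log\Delta_{\max})$ bits total, with no auxiliary structure.

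Your packing idea, by contrast, requires the decoder to consult ``a succinct auxiliary table'' that depends on the particular input metric. In the paper's model the decoding algorithm $\cA$ is fixed for the whole family $\cX$ and receives \emph{only} the two labels; any metric-specific table must therefore be embedded in each label, which brings back the $\log n$ factor you were trying to save. Replacing cluster identifiers by random bits is the missing trick.
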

A formal definition of the distance labeling model
and a proof of \cref{thm:(Distance-Labeling-In-l_p-p_greater-than-2)}
appear in \cref{sec:(Distance-Labeling)}.

\subsection{Related Work}
\label{sec:related}

We focus on Lipschitz decomposition and on capped decomposition,
that was introduced in \cite{FN22}, 
but the literature studies several different decompositions of metric spaces
into low-diameter clusters, see e.g.~\cite{MN07, Filtser24}.
In particular, the notion of padded decomposition~\cite{Rao99, KL07}
is closely related and was used extensively,
see for example~\cite{Rao99, Bartal04, LN03draft, MN07, KLMN05}.
While a Lipschitz decomposition guarantees that nearby points are likely to be clustered together,
a padded decomposition guarantees that each point is, with good probability,
together with all its nearby points in the same cluster.
Remarkably, if a metric space admits a padded decomposition then it admits
also a Lipschitz decomposition with almost the same parameters~\cite{LN03draft},
however the other direction is not true, as demonstrated by $\ell_2^d$.
The problem of computing efficiently the optimal decomposition parameters
for an input metric space $(X,\rho)$ was studied in~\cite{KR11}.
Specifically for Lipschitz decomposition, %
they show that $\beta^*(X)$ can be $O(1)$-approximated in polynomial time (in $n$).

\section{Decompositions and Spanners in $\ell_p$ for $p>2$}
\label{sec:(Stochastic-Decomposition)}

In this section we consider finite subsets of \( \ell_{p} \) for \( p \in (2, \infty) \). 
We first present (in \cref{sec:Lipschitz-Decomposition-p-greater-than-2})
a new Lipschitz decomposition,
which proves \cref{thm:(Lipschitz-Decompositions-in-lp)}.
Next, we show (in \cref{sec:capped-decomposition})
a new construction of capped decomposition,
which is a related notion of decomposability
that was introduced in~\cite{FN22} without a concrete name. 
Finally we obtain (in \cref{sec:spanner}) new spanners, 
which prove \cref{thm:(Decomposable-Metrics-Spanner)}.
This is actually an immediate corollary of our capped decomposition,
by following the spanner construction of~\cite{FN22}.

\subsection{Lipschitz Decomposition in $\ell_{p}$ for $p\in(2,\infty)$}
\label{sec:Lipschitz-Decomposition-p-greater-than-2}

Before presenting the proof of \cref{thm:(Lipschitz-Decompositions-in-lp)}, 
we first provide the intuition behind the proof.
A common approach in many algorithms for metric spaces
is to embed the given metric into a simpler one (e.g., a tree metric),
solve the problem in the target metric,
and then pull back this solution to the original metric.
For our purpose,
of constructing a Lipschitz decomposition of $X \subset \ell_p$, $p > 2$,
a natural idea is to seek a low-distortion embedding of $X$ into $\ell_2$,
because we already have decompositions for that space,
namely, $\beta^{*}_{n}(\ell_2) = O(\sqrt{\log{n}})$.
Ideally, the embedding into $\ell_2$ would be \emph{oblivious}, 
meaning that it embeds the entire $\ell_p$ (not only $X$) into $\ell_2$,
but unfortunately such an embedding does not exist
(it would imply oblivious dimension reduction in $\ell_p$ for $p>2$,
which is provably impossible~\cite{CS02}).
We get around this limitation by employing a \emph{data-dependent} approach,
where the decomposition depends on the input set $X$.
More precisely, we use Mazur maps,
which provide a low-distortion embedding from $\ell_p$ to $\ell_2$,
but only for sets of bounded diameter (see \cref{cor:(low-dist-l_p-to-l_2-embedding)}).
We thus first decompose $X$ into bounded-diameter subsets
by applying a standard Lipschitz decomposition
(that is applicable for every $n$-point metric). 
The final decomposition is obtained by pulling back
the solution (clusters) we found in $\ell_2$.

We proceed to introduce some technical results
needed for our proof of \cref{thm:(Lipschitz-Decompositions-in-lp)}.
The first one is a well-known bound for Lipschitz decomposition of a finite metric.   

\begin{theorem} [\cite{Bartal96}]
\label{thm:(Bartal-Prob-Partitions)}
Every $n$-point metric \((X, \rho)\) admits an \(\left(O\left(\log n\right), \Delta\right)\)-Lipschitz decomposition for every \(\Delta > 0\).
\end{theorem}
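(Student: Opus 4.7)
The plan is to produce the standard CKR-style random partition (Calinescu--Karloff--Rabani, refined by Fakcharoenphol--Rao--Talwar), which yields the claimed $O(\log n)$ bound and in fact works for any $n$-point metric. First I would draw a radius $r$ uniformly from the interval $[\Delta/4,\Delta/2]$ and an independent uniformly random permutation $\pi$ of the points of $X$. Then I would define the partition $P$ by assigning each $x\in X$ to the center $\pi(i)$, where $i$ is the smallest index with $\rho(x,\pi(i))\le r$ (every $x$ is covered, because it sees itself in its own ball). Each cluster is contained in a ball of radius $r\le\Delta/2$, so its diameter is at most $\Delta$, which gives condition~(1) of \cref{defn:(Lipschitz-Decomposition)} deterministically.

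The heart of the proof is bounding $\Pr[P(x)\neq P(y)]$. Fix $x,y\in X$ and enumerate $X = \{z_1,z_2,\ldots,z_n\}$ in nondecreasing order of $\min(\rho(z,x),\rho(z,y))$. I call $z_k$ the \emph{cutter} for the pair $(x,y)$ if $z_k$ is the first point in the permutation $\pi$ whose ball of radius $r$ captures at least one of $x,y$, and that ball captures exactly one of them. The point $x$ and $y$ end up in different clusters if and only if some $z_k$ is a cutter, so by a union bound it suffices to control, for each $k$, the probability that $z_k$ is a cutter.

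For a fixed $k$, being a cutter requires two things: first, $r$ must fall in the interval $\bigl[\min(\rho(z_k,x),\rho(z_k,y)),\max(\rho(z_k,x),\rho(z_k,y))\bigr]$, an interval of length at most $\rho(x,y)$ by the triangle inequality, which occurs with probability at most $4\rho(x,y)/\Delta$ since $r$ is uniform on an interval of length $\Delta/4$; second, $z_k$ must precede in $\pi$ every $z_j$ with $j<k$ that could have otherwise grabbed $x$ or $y$ first, and by the sorted order there are at least $k-1$ such earlier candidates, so by symmetry this has probability at most $1/k$. Multiplying and summing over $k=1,\ldots,n$ yields
\[
\Pr[P(x)\neq P(y)] \;\le\; \sum_{k=1}^{n} \frac{4\rho(x,y)}{\Delta}\cdot\frac{1}{k} \;=\; O(\log n)\cdot\frac{\rho(x,y)}{\Delta},
\]
which is exactly condition~(2) with $\beta=O(\log n)$.

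The main technical obstacle is the second factor in the bound above: justifying the $1/k$ probability requires a careful conditioning argument, because the event ``$z_k$ is the first in $\pi$ to capture one of $x,y$'' intertwines the randomness of $\pi$ with that of $r$. The clean way is to condition on $r$, to note that once $r$ is fixed the set of candidates that could possibly capture $x$ or $y$ is determined and contains $z_1,\ldots,z_k$ whenever $z_k$'s ball reaches one of the endpoints, and then to use symmetry of $\pi$ restricted to those candidates; alternatively, one can use the Fakcharoenphol--Rao--Talwar trick of sampling $r$ after inspecting $\pi$ only through the ordering of the $z_j$. Either way delivers the bound, and the theorem follows.
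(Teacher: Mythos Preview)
The paper does not prove this theorem; it is stated with a citation to \cite{Bartal96} and used as a black box in the proof of \cref{thm:(Lipschitz-Decompositions-in-lp)}. So there is no ``paper's own proof'' to compare against.

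That said, your proof is correct and is the now-standard CKR/FRT argument. The key step you flag as delicate---the $1/k$ factor---is fine exactly for the reason you give: once $r$ is fixed and satisfies $r\ge\min(\rho(z_k,x),\rho(z_k,y))$, every $z_j$ with $j<k$ has $\min(\rho(z_j,x),\rho(z_j,y))\le r$ by the sorted order and hence also captures one of $x,y$; since $\pi$ is independent of $r$, the conditional probability that $z_k$ precedes all of $z_1,\ldots,z_{k-1}$ in $\pi$ is exactly $1/k$, and you may multiply. One small historical remark: Bartal's original 1996 construction is not quite this one (it predates CKR), but the bound it yields is the same, and the CKR/FRT partition you wrote down is the cleanest route to the $O(\log n)$ Lipschitz guarantee.
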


Next, we define the \emph{Mazur map}, 
which is an explicit embedding $M_{p,q}:\ell_{p}^{m} \to l_q^m$ for $1<q<p<\infty$.
The image of an input vector $v$ is computed in each coordinate separately,
by raising the absolute value to power $p/q$ while keeping the original sign.
The next theorem appears in~\cite{BG19},
where it is stated as an adaptation of~\cite{benyamini1998geometric},
and we will actually need the immediate corollary that follows it.

\begin{theorem} [\cite{benyamini1998geometric,BG19}]
Let $1\leq{q}<p<\infty$ and $C_0>0$,
and let $M$ be the Mazur map $M_{p,q}$ scaled down by factor $\frac{p}{q}{C_0}^{p/q-1}$.
Then for all $x,y\in\ell_{p}$ such that $||x||_p,||y||_p\leq C_0$,
\[
  \tfrac{q}{p} (2 C_0 )^{1 - p/q} ||x - y||_{p}^{p/q}
  \leq ||M(x) - M(y)||_{q}
  \leq ||x - y||_{p} .
\]
\end{theorem}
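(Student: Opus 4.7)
The plan is to prove the two inequalities separately, both by reducing to a scalar estimate on the coordinate map $\phi(u) := \mathrm{sgn}(u)|u|^\alpha$ with $\alpha := p/q \geq 1$, and then aggregating across coordinates. Recall that $M$ is obtained by applying $\phi$ coordinate-wise and dividing by $\alpha C_0^{\alpha-1}$, and that $q\alpha = p$.

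For the upper bound I would work with the Fr\'echet derivative of $\phi$, which at a point $x$ with all nonzero coordinates is the diagonal operator $D\phi(x)h = (\alpha|x_i|^{\alpha-1}h_i)_i$. H\"older's inequality with conjugate exponents $(p/q,\, p/(p-q))$ applied to $\sum_i |x_i|^{p-q}|h_i|^q$ (note that $q(\alpha-1) = p-q$) yields
\[
  \|D\phi(x)h\|_q^q
  = \alpha^q \sum_i |x_i|^{p-q}|h_i|^q
  \leq \alpha^q \|x\|_p^{p-q}\|h\|_p^q,
\]
so $\|D\phi(x)\|_{p \to q} \leq \alpha\|x\|_p^{\alpha-1} \leq \alpha C_0^{\alpha-1}$ whenever $\|x\|_p \leq C_0$. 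Integrating along the line segment from $x$ to $y$ (which lies in the ball of radius $C_0$ by convexity, and on which $\phi$ fails to be smooth only at a countable set of parameters where some coordinate vanishes) gives the Lipschitz estimate $\|\phi(x) - \phi(y)\|_q \leq \alpha C_0^{\alpha-1}\|x - y\|_p$, and dividing by the scaling factor produces $\|M(x) - M(y)\|_q \leq \|x - y\|_p$.

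For the lower bound I would first establish the scalar inequality
\[
  |\phi(s) - \phi(t)| \geq 2^{1-\alpha}|s - t|^\alpha
  \qquad \text{for all } s, t \in \mathbb{R}.
\]
By the homogeneity $\phi(\lambda u) = \lambda^\alpha \phi(u)$ for $\lambda>0$, it suffices to verify this for $|s-t|=1$; a short critical-point analysis, splitting into the same-sign and opposite-sign regimes, then shows that the minimum value $2^{1-\alpha}$ is attained at $(s, t) = (\tfrac12, -\tfrac12)$. Raising to the $q$-th power and summing coordinate-wise, using $q\alpha = p$, gives $\|\phi(x) - \phi(y)\|_q^q \geq 2^{q(1-\alpha)}\|x - y\|_p^p$. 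Dividing by $\alpha C_0^{\alpha-1}$ and rewriting $2^{1-\alpha}C_0^{1-\alpha} = (2C_0)^{1-\alpha}$ yields the stated lower bound. The main technical obstacle is precisely identifying the extremal pair $(\tfrac12, -\tfrac12)$ for the scalar inequality and verifying its global optimality: both sign regimes must be handled, and the minimizer necessarily straddles zero, where $\phi$ is not smooth.
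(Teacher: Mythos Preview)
The paper does not actually give a proof of this statement: it is quoted as a known result, attributed to \cite{benyamini1998geometric,BG19}, and used as a black box (via \cref{cor:(low-dist-l_p-to-l_2-embedding)}). So there is no ``paper's own proof'' to compare against.

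That said, your argument is correct and is essentially the standard proof of the Mazur-map distortion estimates found in the cited references. A couple of minor comments. First, since $\alpha=p/q>1$ strictly, the scalar map $\phi(u)=\mathrm{sgn}(u)|u|^\alpha$ is in fact $C^1$ on all of $\mathbb{R}$ (with $\phi'(0)=0$), so your caveat about non-smooth parameters along the segment is unnecessary, though harmless. Second, for the lower-bound scalar inequality you can avoid any critical-point analysis: after normalizing to $s-t=1$, the opposite-sign regime $s\in[0,1]$ gives $g(s)=s^\alpha+(1-s)^\alpha$, which is minimized at $s=\tfrac12$ by convexity of $u\mapsto u^\alpha$, while in the same-sign regime $s\ge 1$ the mean value theorem gives $s^\alpha-(s-1)^\alpha\ge 1\ge 2^{1-\alpha}$ directly. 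This makes the ``main technical obstacle'' you flag essentially disappear.
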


\begin{corollary}
\label{cor:(low-dist-l_p-to-l_2-embedding)}
Let $2<p<\infty$.
Every $n$-point set \(X \subset \ell_{p}\) with diameter at most \( C_0>0\)
admits an embedding \(f: X \to \ell_{2}\) such that
\[
  \forall x,y \in X,
  \qquad
  \tfrac{2}{p} (2 C_0)^{1 - p/2} \|x - y\|_{p}^{p/2}
  \leq \|f(x) - f(y)\|_{2}
  \leq \|x - y\|_{p} .
\]
\end{corollary}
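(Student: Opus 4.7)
The plan is simply to translate $X$ so that it sits inside the origin-centered $\ell_p$-ball of radius $C_0$, and then invoke the stated theorem with $q=2$. First I would pick any reference point $x_0 \in X$ and consider the translated set $X' := \{x - x_0 : x \in X\}$. Because $\diam(X) \leq C_0$, every $x' \in X'$ satisfies $\|x'\|_p = \|x - x_0\|_p \leq \diam(X) \leq C_0$, which matches the norm hypothesis ``$\|x\|_p,\|y\|_p \leq C_0$'' of the theorem for every pair in $X'$.

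Next, with $q=2$, let $M$ denote the Mazur map $M_{p,2}$ scaled down by the factor $\frac{p}{2} C_0^{p/2-1}$ prescribed by the theorem, and define the embedding $f : X \to \ell_2$ by $f(x) := M(x - x_0)$. Since translation preserves $\ell_p$ differences, $(x-x_0) - (y-x_0) = x - y$, so applying the theorem to the pair $x - x_0,\, y - x_0 \in X'$ with $q = 2$ yields directly
\[
  \tfrac{2}{p}(2C_0)^{1 - p/2}\, \|x - y\|_p^{p/2}
  \;\leq\; \|f(x) - f(y)\|_2
  \;\leq\; \|x - y\|_p ,
\]
which is exactly the claimed inequality, with the constants $\tfrac{q}{p} = \tfrac{2}{p}$ and $(2C_0)^{1-p/q} = (2C_0)^{1-p/2}$ arising automatically from the substitution.

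There is no real technical obstacle: the entire task is to reconcile the theorem's hypothesis on the \emph{norms} $\|x\|_p, \|y\|_p$ with the corollary's hypothesis on the \emph{diameter} of $X$, and a single translation accomplishes this at no cost to pairwise distances. The only subtlety worth flagging — and the reason the statement is phrased as an embedding of the finite set $X$ rather than of the whole $\ell_p$ — is that the map $f$ depends on the translate $x_0$ chosen from $X$. This is precisely the \emph{data-dependent} feature of the construction emphasized in the introduction of \cref{sec:Lipschitz-Decomposition-p-greater-than-2}, and the reason that one must first pre-partition $X$ into bounded-diameter pieces (via \cref{thm:(Bartal-Prob-Partitions)}) before applying this corollary.
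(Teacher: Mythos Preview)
Your proposal is correct and matches the paper's intent: the paper states this as an ``immediate corollary'' of the preceding Mazur-map theorem and gives no explicit proof, and the translation-then-apply-with-$q=2$ argument you wrote out is precisely the natural (and only) way to bridge the diameter hypothesis to the norm hypothesis.
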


\begin{proof}[Proof of \cref{thm:(Lipschitz-Decompositions-in-lp)}]
Let \( \Delta > 0 \), and let \( X \subset \ell_{p} \) be an \( n \)-point metric space for \( p \in (2, \infty) \).
Construct a partition of $X$ in the following steps:
\begin{enumerate} \compactify
\item
  Construct for \( X \) an \( (O(\log{n}), \log^{1/p}{n} \cdot \Delta/4) \)-Lipschitz decomposition \( \Pinit  = \{ K_{1}, \ldots, K_{t} \} \)
  using \cref{thm:(Bartal-Prob-Partitions)}.
\item
  Embed each cluster $K_i \subset \ell_p$ into \( \ell_2 \)
  using the embedding \( f^{K_{i}} \) provided by \cref{cor:(low-dist-l_p-to-l_2-embedding)} for $C_0 := \log^{1/p}{n} \cdot \Delta/4$.
\item
  For each embedded cluster \( f^{K_{i}}(K_{i}) \),
  construct an \( (O(\sqrt{\log{n}}), \frac12 \Delta/ \log^{1/2 -1/p} n) \)-Lipschitz decomposition \( P_{i} = \{K_{i}^{1}, \ldots,K_{i}^{k_{i}} \} \) 
  using \cite{CCGGP98} and the JL Lemma~\cite{JL84}.
\item 
  The final decomposition $\Pout$ is obtained by taking the preimage of every cluster of every $P_i$. 
\end{enumerate}

It is easy to see that that $\Pout$ is indeed a partition of $X$,
consisting of $\sum_{i=1}^t k_i$ clusters.
Next, consider $x,y\in X$ and let us bound $\Pr[\Pout(x) \neq \Pout(y)]$. 
Observe that a pair of points can be separated only in steps 1 or 3. 
Therefore, 
\begin{align*}
  \Pr &\Big[\Pout(x) \neq \Pout(y)\Big]
  \\
  & \leq \Pr\Big[\Pinit (x) \neq \Pinit (y)\Big]
  +  \Pr\Big[P_i(f^{K_i}(x)) \neq P_i(f^{K_i}(y)) \mid \Pinit (x) = \Pinit (y) = K_{i}\Big]
  \\
  & \leq O(\log{n})\frac{\|x - y\|_{p}}{\log^{1/p}{n} \cdot \Delta/4} + O(\sqrt{\log{n}}) \frac{\|f^{K_i}(x) - f^{K_i}(y)\|_{2}}{\frac12 \Delta / \log^{1/2-1/p} n }
  \\
  & \leq O(\log^{1-1/p}{n}) \frac{\|x - y\|_{p}}{\Delta} ,
\end{align*}
where the last inequality follows because each $f^{K_i}$ 
is non-expanding on its cluster $K_{i}\subset \ell_p$.

It remains to show that the final clusters all have diameter at most \( \Delta \).
Let \( x, y \in X \) be in the same cluster, i.e., \( \Pout(x) = \Pout(y) \).
Then $\Pinit (x)=\Pinit (y)=K_i$ and $P_{i}(f^{K_i}(x))=P_{i}(f^{K_i}(y))$.
Combining the maximum possible diameter of \( \Pinit (x) \) and \( P_{i}(f^{K_i}(x)) \) with the contraction guarantees of \( f = f^{K_i} \), we get
\[
  \frac{2}{p} \Big( 2(\log^{1/p}{n})\frac{\Delta}{4} \Big)^{1 - p/2} \|x - y\|_{p}^{p/2} 
  \leq \|f(x) - f(y)\|_{2} 
  \leq \frac{\Delta}{2}\log^{1/p-1/2}{n}.
\]
Rearranging this, we obtain
\( \|x - y\|_p \leq \frac{\sqrt[2/p]{p/2}}{2}\Delta \leq \Delta \),
which completes the proof.
\end{proof}

\subsection{Capped Decomposition in $\ell_{p}$ for $p\in(2,\infty)$}
\label{sec:capped-decomposition}

We now present our construction of capped decomposition,
which is a notion of decomposability
that was introduced in~\cite{FN22} without a concrete name.
We start with its definition, and then present our construction.

\begin{definition}
\label{def:capped}
Let $(X, \rho)$ be a metric space. 
A distribution $\mathcal{D}$ over partitions of $X$ 
is called \emph{$(t, \Delta, \eta)$-capped} if
\begin{enumerate} \compactify
    \item for every partition $P \in \supp(\mathcal{D})$, all clusters $C \in P$ have $\diam(C) \leq \Delta$; and
    \item for every $x, y \in X$ such that \( \rho(x, y) \leq \frac{\Delta}{t} \),
    \[
    \Pr_{P\in\mathcal{D}}[P(x) = P(y)] \geq \eta.
    \]
\end{enumerate}
\end{definition}

Observe that here, unlike in Lipschitz decomposition, 
we have a guarantee on the probability that points \( x, y \in X \) are clustered together 
only if they are within distance \( \frac{\Delta}{t} \) of each other, 
hence the name "capped decomposition".
Moreover, the probability bound does not depend on the exact value of \( \rho(x,y) \). 
We say that $(X,\rho)$ admits a \( (t, \eta) \)-capped decomposition, where \( \eta = \eta(|X|, t) \), if it admits a \( (t, \Delta, \eta) \)-capped decomposition for every \( \Delta > 0 \). 
A family of metrics admits a \( (t, \eta) \)-capped decomposition 
if every metric in the family admits a \( (t, \eta) \)-capped decomposition.

\begin{theorem} 
\label{thm:(Decomposability-Of-lp)}
Let \( p \in (2,\infty) \).
Then every $n$-point metric in \( \ell_{p} \) admits
a \( (t, n^{-O(1/t^q)}) \)-capped decomposition for all $t \geq 1$,
where $q \in (1,2)$ is such that $\frac{1}{p}+\frac{1}{q}=1$.
\end{theorem}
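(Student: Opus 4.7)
The plan is to follow the same blueprint as the proof of \cref{thm:(Lipschitz-Decompositions-in-lp)}: reduce from $\ell_p$ to $\ell_2$ via the Mazur map, and then invoke a known capped decomposition in Euclidean space. Specifically, I would use the fact (the case $p=2$ of~\cite{FN22}) that $\ell_2$ admits a $(t', \Delta', n^{-O(1/t'^2)})$-capped decomposition for every $t' \geq 1$ and $\Delta' > 0$, and pull such a decomposition back to $\ell_p$ while preserving both the diameter bound and the probability guarantee.

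The construction proceeds in three phases. First, apply a preliminary Lipschitz decomposition of $X$ at an $\ell_p$-scale $C_0$ --- using either \cref{thm:(Bartal-Prob-Partitions)}, or \cref{thm:(Lipschitz-Decompositions-in-lp)} to sharpen the $\log n$ factors --- obtaining pieces $K_i$ of $\ell_p$-diameter at most $C_0$. Second, for each $K_i$ apply the Mazur map $f^{K_i}: K_i \to \ell_2$ of \cref{cor:(low-dist-l_p-to-l_2-embedding)}, which is non-expansive and whose contraction is controlled by $C_0$ and $p$. Third, apply the known $(t', \Delta', n^{-O(1/t'^2)})$-capped decomposition to each image $f^{K_i}(K_i)$, and take preimages to form the final partition $\Pout$.

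The parameters $C_0,\Delta',t'$ are then chosen to balance three constraints. The diameter condition translates, via the lower bound of \cref{cor:(low-dist-l_p-to-l_2-embedding)}, into $C_0^{(p-2)/p}(\Delta')^{2/p} \lesssim \Delta$; the capped condition for a pair with $\|x-y\|_p \leq \Delta/t$ requires $\Delta/t \leq \Delta'/t'$, since $f^{K_i}$ is non-expansive so the corresponding $\ell_2$ distance is also at most $\Delta/t$; and we want the Phase~3 probability $n^{-O(1/t'^2)}$ to match the target $n^{-O(1/t^q)}$. A short calculation, using $q=p/(p-1)$, yields the choices
\[
  C_0 = \Theta(\Delta)\, t^{q-1},
  \qquad
  \Delta' = \Theta(\Delta)\, t^{-(p-2)/(2(p-1))},
  \qquad
  t' = t^{q/2},
\]
so that $n^{-O(1/t'^2)} = n^{-O(1/t^q)}$, exactly the claimed capped guarantee.

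The hard part will be controlling Phase~1: with the above choice of $C_0$, the Lipschitz decomposition separates a $\Delta/t$-close pair with probability $\lesssim \beta/t^q$, which is bounded away from $1$ only for $t$ above a $\polylog(n)$ threshold. In the complementary regime, however, the target $n^{-O(1/t^q)}$ is itself polynomially small in $n$, so the Phase~1 loss can be absorbed into the exponent, either by a direct construction based on a ball around a random center guaranteeing probability $1/\poly(n)$, or by composing the Phase~1 and Phase~3 success probabilities and adjusting the constants. Verifying that these regimes glue together, and that the two phases are independent so their success probabilities multiply, constitutes the core technical work.
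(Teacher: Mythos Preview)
Your three-phase blueprint and your parameter choices $C_0=\Theta(\Delta\,t^{q-1})$, $\Delta'=\Theta(\Delta\,t^{q/2-1})$, $t'=t^{q/2}$ are exactly those of the paper, and Phases~2 and~3 go through as you describe. The gap is in Phase~1. Using a Lipschitz decomposition there gives, for a pair at $\ell_p$-distance $\le \Delta/t$, only the bound $\Pr[\text{separated}]\le \beta/t^{q}$, which is vacuous once $t^{q}\lesssim\beta$. Your two proposed patches do not cover the boundary regime $t^{q}=\Theta(\log n)$: there the target $n^{-c/t^{q}}=e^{-\Theta(1)}$ is a \emph{constant}, so a ``ball around a random center'' guarantee of $1/\poly(n)$ is far too weak, while the Lipschitz bound is still useless; and there is no way to ``adjust constants'' to turn a $1/\poly(n)$ Phase~1 success probability into a constant one.

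The clean fix---and this is precisely what the paper does---is to replace the Lipschitz decomposition in Phase~1 by the general-metric \emph{capped} decomposition of~\cite{MN07} (\cref{prop:(Decomposability-Of-General-Metrics)}), applied with parameter $t_1=\Theta(t^{q})$ and diameter $\Delta_1=C_0=\Theta(\Delta\,t^{q-1})$, so that $\Delta_1/t_1=\Delta/t$. This directly yields $\Pr[\Pinit(x)=\Pinit(y)]\ge n^{-O(1/t_1)}=n^{-O(1/t^{q})}$ for \emph{all} $t\ge 1$, with no case split. Multiplying by the Phase~3 success probability $n^{-O(1/t'^{2})}=n^{-O(1/t^{q})}$ gives the claim, and the diameter verification is identical to the one in \cref{thm:(Lipschitz-Decompositions-in-lp)}. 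In short: your architecture is right, but Phase~1 should itself be a capped decomposition rather than a Lipschitz one.
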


Previously, such a decomposition was known only for the extreme case $p=2$
by \cite{FN22}, see \cref{prop:(Decomposability-Of-l2)}, 
and our bound above in fact converges to their bound when $p\to 2$. 
Our proof of \cref{thm:(Decomposability-Of-lp)}
is similar to \cref{thm:(Lipschitz-Decompositions-in-lp)},
and relies on two known capped decompositions, that we introduce next, 
together with the Mazur map \cref{cor:(low-dist-l_p-to-l_2-embedding)}. 

\begin{proposition}[\cite{FN22}]
\label{prop:(Decomposability-Of-l2)}
Every $n$-point subset of \( \ell_2 \) admits a \( (t, n^{-O(1/t^{2})}) \)-capped decomposition for all $t \geq 1$.
\end{proposition}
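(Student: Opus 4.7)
The plan is to produce a $(t, n^{-O(1/t^2)})$-capped decomposition of any $n$-point set $X \subset \ell_2$ by converting a locality-sensitive hash (LSH) family for $\ell_2$ into a random partition. The target exponent $n^{-O(1/t^2)}$ matches exactly the optimal LSH exponent $\rho = 1/t^2$ for $\ell_2$ at gap $t$, so the overall strategy is: take enough independent hashes so that pairs at distance $\geq \Delta$ are essentially certain to be separated (which enforces the diameter bound), and then observe that the corresponding near-collision probability is precisely $n^{-O(1/t^2)}$.

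Concretely, I would use an $\ell_2$-LSH family $\mathcal{H}$ with parameters $r_1 := \Delta/t$, $r_2 := \Delta$ and $\rho = \log(1/p_{\text{near}})/\log(1/p_{\text{far}}) = O(1/t^2)$, which exists by the ball-carving construction of Andoni-Indyk (or any other $\ell_2$-LSH achieving this $\rho$-value). Draw $k := \lceil 3\log n/\log(1/p_{\text{far}})\rceil$ independent hash functions $h_1,\dots,h_k \in \mathcal{H}$, and declare $x,y \in X$ to be in the same cluster iff $h_i(x) = h_i(y)$ for every $i$. For one pair at distance at least $\Delta$, the probability that all $k$ hashes collide is at most $p_{\text{far}}^k \leq n^{-3}$, so a union bound over the $\binom{n}{2}$ pairs shows that with probability at least $1-1/n$ no such pair is grouped together, which forces every cluster to have diameter at most $\Delta$. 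On the other hand, for any pair at distance at most $\Delta/t$ the collision probability of the composed hash is at least $p_{\text{near}}^k = n^{-k\log(1/p_{\text{near}})/\log n} = n^{-O(1/t^2)}$, delivering the required $\eta$.

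The main obstacle is that \cref{def:capped} demands the diameter bound to hold for \emph{every} partition in the support, not merely with high probability. I would resolve this by a repair step: on the unlikely event that the random draw of $h_1,\ldots,h_k$ produces a bucket of diameter exceeding $\Delta$, split that bucket arbitrarily into sub-clusters of diameter at most $\Delta$ before outputting the partition. Since the bad event has probability at most $1/n$, the $\eta$-guarantee for near pairs loses at most an additive $1/n$, which is absorbed into the $n^{-O(1/t^2)}$ bound (adjusting only the hidden constant in the exponent). An equivalent alternative is to condition the distribution on the good event and renormalize. Either way, the proposition follows with no further work, and the entire argument reduces to two standard ingredients: the $1/t^2$ LSH exponent for $\ell_2$, and a union bound over pairs.
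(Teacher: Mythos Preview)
Your approach is correct and is exactly the route the paper indicates: the proposition is not proved in the paper but cited from \cite{FN22}, and later (\cref{sec:capped-decomposition-p-smaller-than-2}) the paper spells out that it follows from the reduction ``LSH with parameter $\gamma$ yields a $(t,n^{-O(\gamma)})$-capped decomposition'' (\cref{lem:LSH-implies-capped-decomposition}) combined with the Andoni--Indyk $\ell_2$-LSH achieving $\gamma=O(1/t^2)$. Your write-up is just an explicit unpacking of that reduction.

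One small caveat in your accounting of the repair step: with $k=\lceil 3\log n/\log(1/p_{\text{far}})\rceil$ you obtain $p_{\text{near}}^k=n^{-3\rho}$, and when $3\rho\ge 1$ (i.e., $t=O(1)$) the quantity $n^{-3\rho}-1/n$ can be non-positive, so the additive $1/n$ loss is \emph{not} absorbed merely by enlarging the hidden constant in the exponent. The fix is routine --- handle $t\le t_0$ separately via \cref{prop:(Decomposability-Of-General-Metrics)}, which already gives $n^{-O(1/t)}=n^{-O(1/t^2)}$ in that regime, or tune $k$ so that $k\rho<1$ and use a per-bucket cleanup rather than a global union bound --- but as written the argument has a gap at the bottom of the range of $t$.
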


\begin{proposition}[Implicit in~\cite{MN07}]
\label{prop:(Decomposability-Of-General-Metrics)}
Every $n$-point metric space admits a \( (t, n^{-O(1/t)}) \)-capped decomposition for all $t \geq 1$.
\end{proposition}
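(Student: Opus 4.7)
The plan is to derive this as a direct corollary of the ``Ramsey-type'' padded decomposition of Mendel and Naor~\cite{MN07}. The key observation is that capped decomposition is a strict weakening of padded decomposition: the $(t,\Delta,\eta)$-capped guarantee only requires a specific point $y$ with $\rho(x,y) \leq \Delta/t$ to land in $P(x)$ with probability at least $\eta$, whereas a padded decomposition requires the strictly stronger event that the entire ball $B(x,\Delta/t)$ be contained in $P(x)$. Since $\rho(x,y) \leq \Delta/t$ forces $y \in B(x,\Delta/t)$, the first guarantee follows from the second.

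Mendel and Naor construct, for every $n$-point metric $(X,\rho)$, every $\Delta > 0$, and every $\alpha \geq 1$, a distribution over $\Delta$-bounded partitions $P$ satisfying
\[
  \forall x \in X, \qquad
  \Pr\bigl[\, B(x,\Delta/(c\alpha)) \subseteq P(x)\,\bigr] \;\geq\; n^{-1/\alpha},
\]
for an absolute constant $c$. Instantiating with $\alpha := t/c$ and combining with the observation above, for every pair $x,y \in X$ with $\rho(x,y) \leq \Delta/t$ we have
\[
  \Pr\!\left[\, P(x) = P(y)\,\right]
  \;\geq\; \Pr\bigl[\, B(x,\Delta/t) \subseteq P(x)\,\bigr]
  \;\geq\; n^{-c/t}
  \;=\; n^{-O(1/t)} ,
\]
which is precisely the capped-decomposition guarantee that the proposition requires.

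The underlying partition in~\cite{MN07} is a standard CKR-style ball-carving: draw a uniformly random permutation $\pi$ of $X$ and a random radius $R$ on $[\Delta/4,\Delta/2]$ with density proportional to $1/r$ (which automatically ensures cluster diameter at most $\Delta$), then assign each point to the cluster of the first $\pi$-order center within distance $R$. The technical heart of the proof---and the main obstacle to extract from~\cite{MN07}---is the lower bound on the ball-padding probability. A naive union-bound over the potential cutters of $B(x,\Delta/(c\alpha))$ only yields a separation probability of $O((\log n)/\alpha)$, which is vacuous once $\alpha = o(\log n)$. The insight of~\cite{MN07} is to replace this by a multiplicative ``survival'' analysis: ordering the points of $X$ by their distance from $x$ as $y_1, y_2, \ldots$, the probability that each successive $y_i$ cuts the ball yet no earlier $y_j$ in $\pi$ has done so telescopes, under the logarithmic radius distribution, into a product lower-bounded by $\exp(-O((\log n)/\alpha)) = n^{-O(1/\alpha)}$, which is exactly the padding bound displayed above.
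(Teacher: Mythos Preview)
Your proposal is correct and matches what the paper intends: the paper gives no proof at all, merely citing the result as ``implicit in~\cite{MN07}'', and your argument is precisely the natural way to make that implication explicit --- invoke the single-scale padded-decomposition lemma of Mendel--Naor and observe that padding trivially implies the capped guarantee.

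One small inaccuracy in your expository last paragraph: in the relevant lemma of~\cite{MN07} the radius $R$ is drawn \emph{uniformly} on $[\Delta/4,\Delta/2]$, not with density proportional to $1/r$; the telescoping-product bound on the padding probability goes through with the uniform distribution. This does not affect your actual derivation, which lives entirely in the first two paragraphs and is sound as stated.
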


\begin{proof}[Proof of \cref{thm:(Decomposability-Of-lp)}]
Let \(\Delta > 0\) and \(t \geq 1\).
Let \(X \subset \ell_p\) be an \(n\)-point subset of \(p \in (2, \infty)\),
where \(q\) is such that \(\frac{1}{p} + \frac{1}{q} = 1\).
Construct a partition of $X$ in the following steps:
\begin{enumerate} \compactify
\item
  Construct for $X$ a \( (t_{1} := t^{q}/4, \Delta_{1} := \Delta/4t^{1-q}, n^{-O(1/t^{q})}) \)-capped decomposition \(  \Pinit = \{ K_{1}, \ldots, K_{t} \} \)
  using \cref{prop:(Decomposability-Of-General-Metrics)}.
\item
  Embed each cluster $K_i \subset \ell_p$ into \( \ell_2 \)
  using the embedding \( f^{K_{i}} \) provided by \cref{cor:(low-dist-l_p-to-l_2-embedding)} for $C_0 := \Delta_{1}$.
\item
  For each embedded cluster \( f^{K_{i}}(K_{i}) \)
  construct a \( (t_{2} := t^{q/2}/2, \Delta_{2} := \Delta/2t^{1-q/2}, n^{-O(1/t^{q})}) \)-capped decomposition \( P_{i} = \{ K_{i}^{1}, \ldots, K_{i}^{k_{i}} \}\)
  using \cref{prop:(Decomposability-Of-l2)}.
\item 
  The final decomposition $\Pout$ is obtained by taking the preimage of every cluster of every $P_i$. 
\end{enumerate}

It is easy to see that that $\Pout$ is indeed a partition of $X$,
consisting of $\sum_{i=1}^t k_i$ clusters.
Next, consider $x,y\in X$ with \( \|x - y\|_p \leq \Delta/t \)
and let us bound $\Pr[\Pout(x) = \Pout(y)]$.
Observe that $\Delta_{1}/t_{1} = \Delta_{2}/t_{2} = \Delta/t$,
and therefore
\begin{align*}
  \Pr & \Big[ \Pout(x) = \Pout(y) \Big]
  \\
  & = \Pr \Big[ \Pinit(x) = \Pinit(y) \Big]
    \cdot \Pr\Big[ P_{i}(f^{K_{i}}(x)) = P_{i}(f^{K_{i}}(y)) \mid \Pinit(x) = \Pinit(y) = K_{i} \Big]
  \\
  & \geq n^{-O\left(1/t^{q}\right)} \cdot n^{-O\left(1/t^{q}\right)} = n^{-O\left(1/t^{q}\right)} ,
\end{align*}
where the inequality follows because each $f^{K_i}$
is non-expanding on its cluster $K_i \subset \ell_p$. 

It remains to show that each cluster has diameter at most \( \Delta \).
Let \( x, y \in X \) be in the same cluster, i.e., \( \Pout(x) = \Pout(y) \).
Then $\Pinit (x)=\Pinit (y)=K_i$ and $P_{i}(f^{K_i}(x))=P_{i}(f^{K_i}(y))$.
Combining the maximum possible diameter of \( \Pinit (x) \) and \( P_{i}(f^{K_i}(x)) \) with the contraction guarantees of \( f = f^{K_i} \), we get
\[
  \frac{2}{p} \Big( 2\frac{\Delta}{4t^{1-q}} \Big)^{1 - p/2} \|x - y\|_{p}^{p/2} 
  \leq \|f(x) - f(y)\|_{2} 
  \leq \frac{\Delta}{2t^{1-q/2}}.
\]
Rearranging this, we obtain
\( \|x - y\|_p \leq \frac{\sqrt[2/p]{p/2}}{2}\Delta \leq \Delta \),
which completes the proof.
\end{proof}

\subsection{Spanners in $\ell_{p}$ for $p\in(2,\infty)$}
\label{sec:spanner}

We can now prove \cref{thm:(Spanner-In-l_p-p-greater-than-2)},
by applying the following spanner construction of~\cite{FN22}.

\begin{theorem} [\cite{FN22}]
\label{thm:(Decomposable-Metrics-Spanner)}
Let \( (X, \rho) \) be an \( n \)-point metric space admitting a \( (t, \eta) \)-capped decomposition for some $t \geq 1$.
Then, for every \( \epsilon \in (0, 1/8) \),
there exists a \( (2+\epsilon)t \)-spanner for \( X \)
with \( O_{\epsilon} (\frac{n}{\eta} \cdot \log n \cdot \log t ) \) edges
and lightness \( O_{\epsilon} (\frac{t}{\eta} \cdot \log^2 n ) \).
\end{theorem}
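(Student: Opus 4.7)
The plan is to follow the classical multi-scale clustering paradigm. I would fix a geometric sequence of distance scales $\Delta_i = (1+\epsilon/8)^i$ ranging from roughly the minimum positive pairwise distance in $X$ up to the diameter. At each scale $\Delta_i$, I would independently draw $k = \Theta(\epsilon^{-1} \log(n)/\eta)$ partitions from the given $(t, \Delta_i, \eta)$-capped decomposition. For every cluster $C$ appearing in any of these partitions, pick an arbitrary \emph{center} $c_C \in C$ and add to the spanner the star $\{\,\{c_C, x\} : x \in C\,\}$, weighted by $\rho(c_C,x)$. As a backbone, also include all edges of a minimum spanning tree $T$ of $X$. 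This costs only $n-1$ edges and lightness $1$, so it does not affect the final bounds.

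For the stretch analysis, fix $x, y \in X$ with $d = \rho(x,y)$, and choose the smallest index $i$ such that $\Delta_i/t \geq d$, so $\Delta_i \leq (1+\epsilon/8)\, t d$. The capped property gives $\Pr[P(x)=P(y)] \geq \eta$ in a single partition at scale $\Delta_i$; with $k$ independent trials the failure probability is at most $(1-\eta)^k \leq n^{-3}$, so a union bound over all $\binom{n}{2}$ pairs and $O(\log n)$ scales shows that with positive probability every pair is co-clustered in some partition at its chosen scale. Whenever $x,y \in C$ in such a partition, the path $x \to c_C \to y$ in the spanner has length at most $2\Delta_i \leq (2+\epsilon)\, t d$, establishing the desired stretch. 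The sparsity count per scale is $O(n)$ per partition, hence $O(kn \log n) = O(n \log^2 n /\eta)$ naively; to shave this down to $O(n \log n \log t/\eta)$ one observes that for any pair $x,y$ only the $O(\log t)$ scales $\Delta_i$ with $\Delta_i/t \in [\rho(x,y),\, t \rho(x,y)]$ are useful, because shorter scales yield no co-clustering and longer scales can be charged to the MST backbone (pairs whose distance is much smaller than the relevant MST edge are handled through tree hops).

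For lightness, each star edge at scale $\Delta_i$ has weight at most $\Delta_i$, so naively the total weight could be as large as $O(n \Delta_{\max})$ per partition, which is far too much. The main technical obstacle, and the most delicate part of the proof, is a \emph{charging argument}: at each scale $\Delta_i$, charge every cluster $C$ to the MST edges of length $\Theta(\Delta_i/t)$ lying within a constant-factor neighborhood of $C$. Since clusters have diameter at most $\Delta_i$ and are (on average) separated by at least $\Delta_i/t$, the stars at scale $\Delta_i$ contribute total weight $O(t) \cdot w(T_i)$, where $T_i$ is the portion of the MST relevant at that scale. Summing the charge $O(t)$ across $k = O(\log n /\eta)$ repetitions and $O(\log n)$ scales gives lightness $O(t \log^2 n /\eta)$. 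Making this charging rigorous, in particular choosing the cluster representatives so that stars pack efficiently against the MST and verifying that no MST edge is overcharged across scales, is the crux of the argument; the stretch bound is relatively routine once the multi-scale setup is in place.
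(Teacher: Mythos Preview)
The paper does not prove this theorem at all: it is quoted verbatim from \cite{FN22} and used as a black box to derive \cref{thm:(Spanner-In-l_p-p-greater-than-2)} and \cref{thm:(Spanner-In-l_p-p-less-than-2)}. There is therefore no ``paper's own proof'' to compare your proposal against.

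That said, a few remarks on the proposal itself. The high-level scheme (geometric sequence of scales, $\Theta(\eta^{-1}\log n)$ independent samples per scale, star edges inside clusters, MST backbone) is indeed the skeleton of the argument in \cite{FN22}, and your stretch analysis is essentially correct. However, two of your quantitative claims are not justified by what you wrote. First, the number of scales is $O(\epsilon^{-1}\log\Phi)$ where $\Phi$ is the aspect ratio, not $O(\log n)$; your naive edge count is therefore $O(kn\log\Phi)$, which can be arbitrarily large. The reduction to $O(kn\log t)$ in \cite{FN22} is obtained not by the informal ``only $O(\log t)$ scales are useful per pair'' observation you give, but by replacing $X$ at scale $\Delta_i$ with a hierarchical $\Theta(\Delta_i/t)$-net $N_i$ and arguing that each point of $X$ belongs to only $O(\log t)$ of the nets before it is subsumed by a net center; this is what makes $\sum_i |N_i| = O(n\log t)$ and hence the claimed edge bound. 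Your explanation that ``longer scales can be charged to the MST backbone'' does not by itself control the edge count. Second, your lightness sketch asserts that stars at scale $\Delta_i$ have total weight $O(t)\cdot w(T_i)$ for some ``relevant portion'' $T_i$ of the MST, but you neither define $T_i$ nor argue that $\sum_i w(T_i) = O(w(T)\log n)$; without that, the claimed $O(t\log^2 n/\eta)$ lightness does not follow. In \cite{FN22} this step again relies crucially on the net structure (star centers are net points, and each net point is charged to a nearby MST path of comparable length), which your proposal omits.
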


\begin{proof}[Proof of \cref{thm:(Spanner-In-l_p-p-greater-than-2)}]
The proof follows directly by combining \cref{thm:(Decomposability-Of-lp)} and \cref{thm:(Decomposable-Metrics-Spanner)}, 
as we can assume \( t = O(\log{n}) \) without loss of generality.
\end{proof}

\section{Spanners in $\ell_{p}$ for $p \in (1,2)$}
\label{sec:(Geo_Spanners_High_Dim)}

This section presents an improved construction of geometric spanners in \(\ell_p\) for \(p \in (1, 2)\). 
Previously, \(O(t)\)-spanners of size \(O(n^{1+\log^2{t}/t^p})\) for all \(t \geq 1\) 
were constructed in~\cite{FN22};
in  particular, setting \(t = (\log{n} \log{\log{n}})^{1/p}\) yields 
an $O(t)$-spanner of near-linear size $\tO(n)$. 
We first present in \Cref{sec:Linear-Size-Spanners-in-l_p-p-less-than-2} 
two different constructions of near-linear-size spanners with a slightly better stretch.
Then in \Cref{sec:capped-decomposition-p-smaller-than-2} we use yet another technique,
namely Locality Sensitive Hashing (LSH), 
to slightly improve the construction of \cite{FN22} of spanners with general stretch $O(t)$.

\subsection{Spanners of Near-Linear Size}
\label{sec:Linear-Size-Spanners-in-l_p-p-less-than-2}

We slightly improve the near-linear size spanner construction of \cite{FN22}
by shaving the $(\log\log{n})^{1/p}$ factor from the stretch, as follows. 

\begin{theorem}
\label{thm:linear-size-spanners-in-l_p-p-less-than-two}
For every fixed $p\in(1,2)$, every $n$-point metric $X\subset \ell_{p}$
admits an $O(\log^{1/p}{n})$-spanner of size $\Tilde{O}(n)$. 
\end{theorem}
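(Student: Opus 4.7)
The plan is to combine the near-optimal Lipschitz decomposition for $\ell_p$, $p \in (1,2)$, with the spanner-from-capped-decomposition machinery stated in \cref{thm:(Decomposable-Metrics-Spanner)}. The bridge between these two tools is a routine conversion showing that any Lipschitz decomposition is automatically a capped decomposition with a constant probability guarantee, which lets one read off $t$ directly from $\beta$ while paying only constants in $\eta$.

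Concretely, I would first invoke the bound $\beta^*_n(\ell_p) = O_p(\log^{1/p} n)$ of \cite{LN03draft, Naor17} recorded in the introduction: every $n$-point $X \subset \ell_p$ admits a $(\beta, \Delta)$-Lipschitz decomposition with $\beta = O(\log^{1/p} n)$ for every $\Delta > 0$. The elementary observation is then that any such distribution is automatically a $(2\beta, \Delta, 1/2)$-capped distribution: if $\rho(x,y) \leq \Delta/(2\beta)$, the Lipschitz property gives $\Pr[P(x) \neq P(y)] \leq \beta \cdot \rho(x,y)/\Delta \leq 1/2$, so $\Pr[P(x) = P(y)] \geq 1/2$, while the diameter bound $\Delta$ is inherited verbatim. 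Hence $X$ admits a $(t, 1/2)$-capped decomposition with $t = O(\log^{1/p} n)$.

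I would then plug this into \cref{thm:(Decomposable-Metrics-Spanner)} with a fixed constant $\epsilon$ (say $\epsilon = 1/16$): the resulting $(2+\epsilon)t$-spanner has stretch $(2+\epsilon)\cdot O(\log^{1/p} n) = O(\log^{1/p} n)$ and $O_\epsilon((n/\eta)\log n \log t) = O(n \log n \log\log n) = \tilde{O}(n)$ edges, as required. There is no real obstacle in this route; it is essentially a two-line composition of existing results, with the only subtlety being the verification that the Lipschitz-to-capped conversion preserves the diameter parameter and the same scale $\Delta$. A second alternative construction, matching the ``two different constructions'' announced at the top of \cref{sec:(Geo_Spanners_High_Dim)}, can be obtained by instantiating the strengthened spanner bound of \cref{thm:(Spanner-In-l_p-p-less-than-2)} at $t = \Theta(\log^{1/p} n)$, which yields $\tilde{O}(n^{1 + 1/\log n}) = \tilde{O}(n)$ edges at the same stretch; this variant is slightly less self-contained since it depends on the LSH-based capped decomposition developed in \cref{sec:capped-decomposition-p-smaller-than-2}.
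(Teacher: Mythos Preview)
Your argument is correct but takes a genuinely different route from the paper. Both of the paper's proofs work by modifying the explicit HIS13 spanner algorithm (hierarchical nets, a Lipschitz decomposition at each level, star edges within clusters): the first proof simply swaps in the $\beta = O_p(\log^{1/p} n)$ Lipschitz decomposition of \cite{LN03draft, Naor17} for the $\ell_2$ decomposition used in HIS13; the second proof instead applies the range-preserving dimension reduction of \cite{BG16} to land in $\ell_p^{O(\log n)}$ and then invokes \cite{CCGGP98} there. You bypass the HIS13 construction entirely, converting the Lipschitz decomposition into a $(2\beta,\tfrac12)$-capped decomposition and feeding that into the FN22 black box \cref{thm:(Decomposable-Metrics-Spanner)}. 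Your route is shorter and more modular but leans on a heavier imported result; the paper's direct approach is more self-contained, and its second proof has the additional feature that it works for non-fixed $p$ (avoiding the $1/(p-1)$ loss). Your second alternative---specializing \cref{thm:(Spanner-In-l_p-p-less-than-2)} at $t=\Theta(\log^{1/p} n)$---is also valid, and the paper itself notes this at the start of \cref{sec:capped-decomposition-p-smaller-than-2}.
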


We present two related but different proofs for this theorem.
Both are based on modifying the spanner algorithm for $\ell_2$ from~\cite{HIS13}, 
and therefore we start with an overview of that algorithm.
Given an input set \(X \subseteq \ell_2\),
the algorithm begins by constructing a hierarchical set of \( 2^i \)-nets
$X = N_0 \supseteq N_1 \supseteq \cdots \supseteq N_{\log{\Delta_X}}$,
where we assume that the minimum and maximum distances in $X$
are $1$ and $\Delta_X$, respectively.
Then, for each level $i$, it constructs an \( (O(\sqrt{\log n}), O(\sqrt{\log n}) \cdot 2^{i+1}) \)-Lipschitz decomposition of $N_i$
by combining the JL Lemma~\cite{JL84} with the Lipschitz decomposition of~\cite{CCGGP98}.
For each cluster in it, the algorithm add to the spanner edges in a star-like fashion,
meaning that all cluster points are connected to one arbitrary point within the cluster.
The last two steps are repeated \( O(\log n) \) times
to ensure that with high probability, for each level $i$,
every \( x, y \in N_i \) with \( \|x - y\|_2 \leq 2^{i+1} \)
are clustered together in at least one of the $O(\log n)$ repetitions.
It is shown in~\cite{HIS13} that
this algorithm constructs an \( (O(\sqrt{\log{n}})) \)-spanner of size \( \Tilde{O}(n) \).

\begin{proof}
[Proof of \cref{thm:linear-size-spanners-in-l_p-p-less-than-two} via Lipschitz Decomposition]
Observe that the above algorithm of \cite{HIS13}
uses the fact that the points lie in $\ell_2$
only for the construction of Lipschitz Decompositions,
and relies on an optimal decomposition for finite $\ell_2$ metrics
to conclude that the spanner's stretch is $O(\beta^{*}_{n}(\ell_2))$. 
For finite \(\ell_p\) metrics, \(p \in (1, 2)\),
we can use instead a Lipschitz decomposition from~\cite{LN03draft, Naor17},
which has \(\beta = \frac{O(\log^{1/p}{n})}{p - 1}\),
to conclude the claimed stretch.
\end{proof}

We next present a proof that modifies the algorithm of \cite{HIS13} differently,
and relies on a decomposition that is similar to a Lipschitz decomposition
but has slightly weaker guarantees. 
Interestingly, this technique yields a slightly stronger result
than \Cref{thm:linear-size-spanners-in-l_p-p-less-than-two},
where \(p\) need not to be fixed and can depend on $n$ (e.g., $p\to 1$). 
We proceed to introduce some technical results from \cite{BG16}
regarding a weak form of dimensionality reduction in \(\ell_p\),
for \(p \in [1, 2]\), which are needed for our proof.

\begin{definition} [\cite{OR02}]
Let \( (X, \rho) \), \( (Y, \tau) \) be metric spaces
and \([a, b]\) be a real interval.
An embedding \( f : X \rightarrow Y \)
is called \emph{\([a, b]\)-range preserving with distortion $D \ge 1$ }
if there exists \( c > 0 \) such that for all \( x, x' \in X \):
\begin{enumerate} \compactify
    \item If \( a \leq \rho(x, x') \leq b \), then \( \rho(x, x') \leq c \cdot \tau(f(x), f(x')) \leq D \cdot \rho(x, x') \).
    \item If \( \rho(x, x') > b \), then \( c \cdot \tau(f(x), f(x')) \geq b \).
    \item If \( \rho(x, x') < a \), then \( c \cdot \tau(f(x), f(x')) \leq D \cdot a \).
\end{enumerate}    
\end{definition}
We say that \( (X,\rho) \) admits \emph{an $R$-range preserving embedding}
into \( (Y,\tau) \) with distortion \( D \),
if for all \( u > 0 \), there exists a \([u, uR]\)-range preserving embedding
into \( Y \) with distortion \( D \).   

\begin{theorem} [\cite{BG16}] 
\label{thm:weak-dimension-reduction}
Let \( 1 \leq p \leq 2 \). For every \( n \)-point set \( S \subset \ell_{p} \), and for every range parameter \( R > 1 \), there exists an \( R \)-range preserving embedding \( f : S \rightarrow \ell_{p}^{k} \) with distortion \( 1 + \epsilon \), such that $k = O\left(\frac{R^{O(1/\epsilon)} \cdot \log n}{\epsilon}\right)$.
\end{theorem}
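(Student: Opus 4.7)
My plan is to construct the range-preserving embedding via $p$-stable random projections combined with a coordinate-wise truncation, a combination specifically designed to tame the heavy tails of $p$-stable distributions for $p<2$ (for which $\mathbb{E}|Z|^p=\infty$, so naive $\ell_p$ readouts cannot concentrate). By scaling it suffices to produce an embedding that preserves distances in $[1,R]$ within a $(1+\epsilon)$ factor, keeps short distances short, and keeps long distances long.

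The construction samples a $k\times d$ matrix $G$ with i.i.d.\ standard $p$-stable entries, fixes a truncation threshold $T:=R^{c/\epsilon}$, and defines $f(x):=(\phi_T(\langle G_i,x\rangle))_{i=1}^k$, where $\phi_T(z):=\mathrm{sign}(z)\min(|z|,T)$. By $p$-stability each $\langle G_i,x-y\rangle$ has distribution $\|x-y\|_p\cdot Z$ with $Z$ standard $p$-stable, so the key bias calculation reduces to $\mathbb{E}[\min(|sZ|,T)^p]\asymp s^p\cdot p\log(T/s)$ (using the tail $\Pr[|Z|>t]\asymp t^{-p}$). This logarithm varies by at most a $(1+\epsilon)$ factor as $s$ ranges over $[1,R]$, which is precisely why $T=R^{O(1/\epsilon)}$ is the correct scale. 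Since the per-coordinate variables $\phi_T(\langle G_i,x-y\rangle)^p$ are bounded by $T^p$, a Bernstein-type inequality gives relative error $\epsilon$ for a single pair with failure probability $\exp(-\Omega(\epsilon^2 k))$; a union bound over $\binom{n}{2}$ pairs then produces the claimed $k=O(R^{O(1/\epsilon)}\log n/\epsilon)$.

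The main obstacle, where I expect most of the technical work to lie, is that truncation does not commute with subtraction: $\phi_T(\langle G_i,x\rangle)-\phi_T(\langle G_i,y\rangle)$ generically differs from $\phi_T(\langle G_i,x-y\rangle)$, so one cannot simply quote $p$-stability on the difference vector. I would address this by first translating $S$ so that $\|x\|_p=O(R)$ for every $x\in S$ and then restricting to the event that $|\langle G_i,x\rangle|\le T$ simultaneously for all $x\in S$ and all $i$; by the $p$-stable tail bound and the choice $T=R^{O(1/\epsilon)}$, this event has failure mass at most $\epsilon$ per coordinate, which is absorbed into the final distortion. For the one-sided guarantees outside the critical range, the $1$-Lipschitzness $|\phi_T(a)-\phi_T(b)|\le|a-b|$ handles $\|x-y\|_p<1$ automatically, while $\|x-y\|_p>R$ is handled by an Indyk-style median lower bound on $\|G(x-y)\|_p$ applied before truncation to certify that the embedded distance is $\gtrsim R$.
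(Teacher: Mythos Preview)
The paper does not prove this theorem; it is quoted verbatim as a black-box result from~\cite{BG16} and used only as an ingredient in the second proof of \cref{thm:linear-size-spanners-in-l_p-p-less-than-two}. There is therefore no proof in the present paper to compare your proposal against.

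That said, your sketch has a genuine gap worth flagging. You propose to ``translate $S$ so that $\|x\|_p = O(R)$ for every $x\in S$,'' but this is only possible when $\diam(S)=O(R)$, which is not part of the hypothesis. A $[1,R]$-range-preserving embedding must also certify that pairs with $\|x-y\|_p>R$ land at embedded distance at least $R$ (condition~2 in the definition), so the input set can have arbitrarily large diameter. Without the bounded-diameter assumption, your workaround for the ``truncation does not commute with subtraction'' obstacle collapses: you can no longer force $|\langle G_i,x\rangle|\le T$ simultaneously for all $x\in S$ with the claimed per-coordinate failure mass, and the subsequent bias and concentration analysis (which is really about $\phi_T(\langle G_i,x-y\rangle)$, not about $\phi_T(\langle G_i,x\rangle)-\phi_T(\langle G_i,y\rangle)$) no longer transfers. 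Also note that your stated failure probability $\exp(-\Omega(\epsilon^2 k))$ is too optimistic: the Bernstein bound carries a $T^p$ in the denominator coming from both the variance and the range, so the correct exponent is $\Omega(\epsilon^2 k / T^{p})$ up to logarithmic factors; this still yields $k=R^{O(1/\epsilon)}\log n$, but the dependence should be stated correctly.
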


\begin{proof}
[Proof of \cref{thm:linear-size-spanners-in-l_p-p-less-than-two} via Weak Dimension Reduction]
Observe that the above algorithm of \cite{HIS13}
only requires the decomposition of each net \(N_i\)
to ensure that points \(x, y \in N_i\) with \(\|x - y\|_2 \leq 2^{i+1}\)
are clustered together with constant probability, 
and that the diameter of all clusters is at most \( O(\sqrt{\log{n}}) \cdot 2^i\);
of course, for $X \subset \ell_p$, $p\in(1,2)$,
we replace the $O(\sqrt{\log{n}})$ factor with $O(\log^{1/p}{n})$.
A careful examination shows that these properties are preserved by first reducing the dimension using the range-preserving embedding provided by \Cref{thm:weak-dimension-reduction} with \(\varepsilon = \frac{1}{2}\) and \(R = 2\), and then constructing a Lipschitz decomposition for the image points in \(\ell_p^{O(\log{n})}\) using~\cite{CCGGP98}.
\end{proof}

\subsection{Spanners with Stretch-Size Tradeoff}
\label{sec:capped-decomposition-p-smaller-than-2}

We now present, in \Cref{thm:(Spanner-In-l_p-p-less-than-2)},
a construction of \(O(t)\)-spanners in $\ell_p$, where $p\in(1,2)$,
of size \(\Tilde{O}(n^{1+1/t^{p}})\) for all \(t \geq 1\),
which slightly improves over the \(O(t)\)-spanners of size \(\Tilde{O}(n^{1+\log^{2}{t}/t^{p}})\) from~\cite{FN22}.
It is worth noting that \Cref{thm:(Spanner-In-l_p-p-less-than-2)}
generalizes the results of \Cref{thm:linear-size-spanners-in-l_p-p-less-than-two},
and thus provides an alternative proof for it.

\begin{definition} [LSH~\cite{IM98}]
Let $\mathcal{H}$ be a family of hash functions mapping a metric $(X, \rho)$ to some universe $U$.
We say that $\mathcal{H}$ is \emph{$(r, tr, p_1, p_2)$-sensitive}
if for every $x, y \in X$, the following is satisfied:
\begin{enumerate} \compactify
    \item If $\rho(x, y) \leq r$, then $\Pr_{h \in \mathcal{H}}[h(x) = h(y)] \geq p_1$.
    \item If $\rho(x, y) > tr$, then $\Pr_{h \in \mathcal{H}}[h(x) = h(y)] \leq p_2$.
\end{enumerate}    
Such $\mathcal{H}$ is called an \emph{LSH family}
with parameter $\gamma := \frac{\log (1/p_1)}{\log (1/p_2)}$. 
\end{definition}

\begin{lemma} [\cite{FN22}]
\label{lem:LSH-implies-capped-decomposition}
    Let $(X, \rho)$ be a metric space such that for every $r > 0$, there exists a $(r, t r, p_1, p_2)$-sensitive LSH family with parameter $\gamma$. Then $(X, \rho)$ admits a $(t, n^{-\mathcal{O}(\gamma)})$-capped decomposition.
\end{lemma}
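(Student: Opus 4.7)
The plan is to turn the given LSH family into a random partition via the standard bucketize-and-amplify scheme, and then enforce the deterministic diameter bound required by \Cref{def:capped} through a simple fallback.

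Fix $\Delta > 0$ and apply the hypothesis at scale $r := \Delta/t$ to obtain an $(r, tr, p_1, p_2)$-sensitive family $\mathcal{H}$. Draw $k := \lceil C \log n / \log(1/p_2)\rceil$ independent hash functions $h_1,\dots,h_k \sim \mathcal{H}$, for a constant $C$ to be optimized, and form the amplified hash $H(x) := (h_1(x),\dots,h_k(x))$. By independence, any two points at distance exceeding $\Delta = tr$ collide under $H$ with probability at most $p_2^k \leq n^{-C}$, while any two points at distance at most $r$ collide under $H$ with probability at least $p_1^k = n^{-C\gamma}$, where the equality uses the identity $\log(1/p_1) = \gamma \log(1/p_2)$.

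Let $P_H$ be the bucketization induced by $H$ (two points share a cluster iff they share a hash value). Since some buckets of $P_H$ may have diameter exceeding $\Delta$, I define the output partition $P$ as $P_H$ when every bucket of $P_H$ has diameter at most $\Delta$, and as the all-singleton partition of $X$ otherwise; every realization of $P$ then trivially satisfies condition (1) of \Cref{def:capped}. For a near pair $x, y$ with $\rho(x,y) \leq \Delta/t$, the event $P(x) = P(y)$ coincides with the event that $H(x) = H(y)$ and no pair of points at distance exceeding $\Delta$ collides under $H$. A union bound over the at most $\binom{n}{2}$ far pairs yields
\[
  \Pr[P(x) = P(y)] \geq p_1^k - \binom{n}{2} p_2^k \geq n^{-C\gamma} - n^{2-C}.
\]
Choosing a constant $C$ with $C(1-\gamma) > 2$, which is possible in the standard LSH regime where $\gamma$ is bounded away from $1$, makes the subtracted term asymptotically negligible and leaves the required bound $n^{-O(\gamma)}$.

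The main obstacle I anticipate is the tension between the deterministic diameter bound of condition (1) and the purely probabilistic bucket-diameter guarantees of LSH; the singleton fallback resolves this at an additive loss in near-collision probability that the amplification parameter $k$ was chosen large enough to absorb. The identity $\gamma = \log(1/p_1)/\log(1/p_2)$ is the algebraic lever that places the near- and far-collision bounds on the same $n^{-\Theta(C)}$ scale and enables their clean balancing.
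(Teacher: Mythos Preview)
The paper does not supply its own proof of this lemma; it is simply quoted from \cite{FN22}. Your argument is the standard amplify-and-bucketize construction that underlies that reference, and it is essentially correct: concatenating $k = \Theta(\log n / \log(1/p_2))$ independent hashes drives the far-collision probability below $n^{-C}$, a union bound over $\binom{n}{2}$ far pairs controls the event that some bucket has diameter $>\Delta$, and the singleton fallback enforces the deterministic diameter condition of \Cref{def:capped} at an additive cost you can absorb.

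Two minor points worth tightening. First, because of the ceiling in $k$ you actually get $p_1^k \ge p_1 \cdot n^{-C\gamma}$ rather than exactly $n^{-C\gamma}$; this extra $p_1$ factor is harmless for the LSH families used here (where $p_1$ is a fixed constant) but should be acknowledged. Second, your choice $C(1-\gamma) > 2$ means the hidden constant in $n^{-O(\gamma)}$ is really $O(\gamma/(1-\gamma))$; you correctly flag that this requires $\gamma$ bounded away from $1$, and indeed the applications in the paper have $\gamma = 1/t^p + o(1) \ll 1$, so the stated $n^{-O(\gamma)}$ bound holds in that regime.
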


For $p=2$, the LSH family constructed in~\cite{AI06}
can be used in \cref{lem:LSH-implies-capped-decomposition}
to conclude that $\ell_2$ admits
a $(t, n^{-O(1/t^{2})})$-capped decomposition for every $t \geq 1$ \cite{FN22},
thereby proving \cref{thm:(Spanner-In-l_p-p-less-than-2)} for this case of $p=2$.
In a similar fashion,
an LSH family constructed in~\cite{Ngu13} for $p \in (1,2)$
was used in~\cite{FN22} to show that these spaces admit a $(t, n^{-O(\log^2{t}/t^{p})})$-capped decomposition.
We observe that this result can be improved by replacing the LSH family from~\cite{Ngu13},
with an alternative one that is briefly mentioned in~\cite{AndoniIndyk},
and consequently prove \cref{thm:(Spanner-In-l_p-p-less-than-2)}.
For completeness, we reproduce this LSH family for $\ell_p$, where $p\in(1,2)$. 

\begin{lemma} [\cite{AndoniIndyk}]
\label{lem:LSH-for-l_p}
    Let $p\in(1,2)$, $r>0$, and large enough $t>1$. Then there exists a $(r, t  r, p_1, p_2)$-sensitive LSH family for $\ell_p$ with parameter $\gamma=\frac{1}{t^p}+o(1)$.
\end{lemma}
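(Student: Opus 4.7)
The plan is to follow the Andoni--Indyk ball-carving template used for $\ell_2$, replacing the Gaussian projection with a $p$-stable projection. Recall that for $p\in(0,2]$ there is a standard $p$-stable distribution $\mathcal{S}_p$ satisfying: if $a\in\mathbb{R}^d$ has i.i.d.\ $\mathcal{S}_p$ entries, then for every $u\in\mathbb{R}^d$, $\langle a,u\rangle$ is distributed as $\|u\|_p\cdot Z$ with $Z\sim\mathcal{S}_p$. This identity is the $p$-stable counterpart of the rotational invariance of Gaussians that is exploited in the $\ell_2$ construction, and it is what makes the ``collision probability only depends on $\|u\|_p$'' property available in our setting.

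First I would fix an auxiliary dimension $k=k(t)\to\infty$ (to be tuned, growing slowly in $t$) and define the following random hash function $h$. Sample a matrix $A\in\mathbb{R}^{k\times d}$ with i.i.d.\ $\mathcal{S}_p$ entries, giving a random linear map $A\colon \ell_p\to\ell_p^k$. Independently, sample a random tessellation $\mathcal{B}$ of $\mathbb{R}^k$ by balls of radius $w=\Theta(rk^{1/p})$, in the ball-carving manner of Andoni--Indyk (keep dropping i.i.d.\ balls until every point is covered, and use a first-ball tie-breaking rule). Define $h(x)$ as the identifier of the ball of $\mathcal{B}$ containing $Ax$; the LSH family $\mathcal{H}$ is then the distribution of $h$ over the joint randomness of $A$ and $\mathcal{B}$.

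For the analysis, set $u=x-y$ and observe that the coordinates of $Au$ are i.i.d.\ copies of $\|u\|_p\cdot Z$ with $Z\sim\mathcal{S}_p$, so the collision probability $\pi(s):=\Pr[h(x)=h(y)\mid \|u\|_p=s]$ is a function of $s$ only. With $w$ as above, the ``close'' case $s\le r$ gives $\pi(r)\ge e^{-O(1)}$ because $Au$ falls in a single ball with constant probability. In the ``far'' case $s\ge tr$, the image $Au$ is rescaled by the factor $t$, and the Andoni--Indyk estimate on the collision probability of a ball-carving as a function of the displacement yields $\pi(tr)\le \exp(-\Omega(t^p))$; the exponent $t^p$ (rather than the $t^2$ seen for $\ell_2$) arises because the relevant invariant of the $p$-stable projection is the $p$-th power of the $\ell_p^k$-norm, which scales by $t^p$ when $\|u\|_p$ scales by $t$. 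Combining the two bounds gives $\gamma=\log(1/p_1)/\log(1/p_2)=1/t^p+o(1)$ as $t\to\infty$.

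The main obstacle is carrying out the ball-carving tail estimate in the heavy-tailed regime: unlike Gaussians, $\mathcal{S}_p$ has a polynomial right tail, so a small number of coordinates of $Au$ may be much larger than the typical scale, which breaks the clean concentration used in the $\ell_2$ analysis. I would handle this via a standard truncation argument, restricting to the event that all $k$ coordinates of $Au$ lie in $[-wk^\epsilon,\,wk^\epsilon]$ for a small $\epsilon>0$; this event has probability $1-o_k(1)$ by a union bound on the $p$-stable tail $\Pr[|Z|>M]=\Theta(M^{-p})$, and on it the ball-carving collision estimate runs essentially as in the $\ell_2$ case, with only an $o(1)$ loss in the exponent that vanishes as $k\to\infty$. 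A final tuning of $k=k(t)$ balances this $o(1)$ error against the concentration of $\|Au\|_p$, yielding the claimed exponent $\gamma=t^{-p}+o(1)$.
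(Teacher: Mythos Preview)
Your approach differs substantially from the paper's, and the key step carries a genuine gap.

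The paper's proof is a two-line black-box reduction. By a classical fact (attributed in the paper to Kalton), the $(p/2)$-snowflake of $\ell_p$ embeds isometrically into $\ell_2$: there is $f\colon\ell_p\to\ell_2$ with $\|f(x)-f(y)\|_2=\|x-y\|_p^{p/2}$. Composing $f$ with the Andoni--Indyk $\ell_2$ LSH at threshold ratio $t'=t^{p/2}$ immediately yields an $(r,tr,p_1,p_2)$-sensitive family for $\ell_p$ with $\gamma=1/(t')^2+o(1)=1/t^p+o(1)$. No new tail or cap-volume analysis is needed; the exponent $t^p$ falls out of the arithmetic $(t^{p/2})^2=t^p$.

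In your direct route the crucial assertion is that ball carving after a $p$-stable projection gives $\pi(tr)\le\exp(-\Omega(t^p))$. This is exactly the step that is not justified. The Andoni--Indyk collision bound is a statement about \emph{Euclidean} balls in $\mathbb{R}^k$: the $t^2$ in the exponent comes from the cap volume of an $\ell_2$-ball, not from the projection. If you carve with $\ell_2$ balls after a $p$-stable projection, then $\|Au\|_2^2=\|u\|_p^2\sum_i Z_i^2$ with $\mathbb{E}[Z_i^2]=\infty$ for $p<2$, so the Euclidean image distance does not concentrate and the $\ell_2$ lemma cannot be invoked. If instead you carve with $\ell_p$ balls, you need both a new cap-volume estimate for $\ell_p^k$-balls and concentration of $\|Au\|_p$; but $\mathbb{E}|Z|^p=\infty$ for $p$-stable $Z$, so the truncation you sketch introduces logarithmic factors into the typical scale of $\|Au\|_p$, and holding the exponent to $1/t^p+o(1)$ is delicate. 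This is not a hypothetical concern: the prior direct $p$-stable construction of Nguyen, which the present paper is explicitly improving upon, only reaches $\gamma=O((\log t)^2/t^p)$. The snowflake reduction is thus not merely simpler; it bypasses a real obstacle that your outline does not overcome.
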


\begin{proof}
Let \( p \in (1,2) \), \( r > 0 \), and sufficiently large \( t > 1 \).
Let \( f : \ell_p \to \ell_2 \) be the isometric embedding of the \( (p/2) \)-snowflake of \( \ell_p \) into \( \ell_2 \) from~\cite[Theorem 4.1]{Kal08}.
Take \( r' = r^{p/2} \) and \( t' = t^{p/2} \), and let \( \mathcal{H} \) be the \( (r', t' r', p_1, p_2) \)-sensitive LSH family for \( \ell_2 \) with parameter \( \gamma = \frac{1}{t'^2} + o(1) \) from~\cite{AI06}. Observe that, for every \( x, y \in \ell_p \), if \( \|x - y\|_p \leq r \), then \( \|f(x) - f(y)\|_2 = \|x - y\|_p^{p/2} \leq r^{p/2} = r' \), and thus 
\[
\Pr_{h \in \mathcal{H}}[h(f(x)) = h(f(y))] \geq p_1.
\]
Similarly, if \( \|x - y\|_p > t r \), then \( \|f(x) - f(y)\|_2 = \|x - y\|_p^{p/2} > (t r)^{p/2} = t' r' \), and hence 
\[
\Pr_{h \in \mathcal{H}}[h(f(x)) = h(f(y))] \leq p_2.
\]
We therefore conclude that \( \mathcal{H} \circ f \) is an \( (r, tr, p_1, p_2) \)-sensitive LSH family for \( \ell_p \) with parameter \( \gamma = \frac{1}{t^p} + o(1) \).
\end{proof}

\begin{proof}[Proof of \cref{thm:(Spanner-In-l_p-p-less-than-2)}]
The proof follows immediately by constructing a capped decomposition 
based on \cref{lem:LSH-implies-capped-decomposition} and \cref{lem:LSH-for-l_p},
and using it in the spanner construction from \cref{thm:(Decomposable-Metrics-Spanner)}.
\end{proof}

\begin{remark}
While \cite[Theorem 4.1]{Kal08} does not provide an efficiently computable embedding, 
one can compute such an embedding for a finite set of points in polynomial time
by \cite{LLR95}.
\end{remark}

\section{Distance Labeling}
\label{sec:(Distance-Labeling)}
In the distance labeling model,
a scheme is designed for an entire a family $\cX$ of $n$-point metrics
(and in some scenarios, all these metrics have the same point set $X$,
e.g., different graphs on the same vertex set).
A \emph{scheme} is an algorithm that preprocesses each metric $X$ in $\cX$
and assigns to each point \(x \in X\) a label \(l(x)\).

\begin{definition}
A scheme is \emph{a distance labeling}
with approximation $D\ge 1$ and label size of $k$ if 
\begin{enumerate} \compactify
\item
  every label (for every point in every metric in $\cX$)
  consists of at most $k$ bits; and 
\item
  there is an algorithm $\cA$ that, given the labels $l(x),l(y)$
  of two points $x,y$ in a metric $(X,\rho)\in \cX$
  (but not given $(X, \rho)$ or the points $x,y$),
  outputs an estimate $\cA(l(x), l(y))$ that satisfies 
  \[
    \rho(x, y) \leq \cA(l(x), l(y)) \leq D \cdot \rho(x, y) .
  \]
\end{enumerate}
\end{definition}

The following theorem was presented in \cite{GKL03} with limited details,
and we include a proof of it below for completeness.

\begin{theorem} [\cite{GKL03}]
\label{thm:(Distance-Labeling-In-Decomposable-Metrics)}
Let $\cX$ be a family of $n$-point metrics, 
and assume that all the pairwise distances in all metrics $(X,\rho)$ in $\cX$
are in the range $[1,\Delta_{\text{max}}]$.
Then $\cX$ admits a distance-labeling scheme
with approximation $O(\beta^*(\cX))$ and label size $O(\log n \log\Delta_{\max})$ bits. 
\end{theorem}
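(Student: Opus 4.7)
The plan is to use independent Lipschitz decompositions at every dyadic scale, and have the decoder return the finest scale at which both points land in a common cluster. This is the hierarchical-clustering template of~\cite{GKL03}, with the resolution supplied by Lipschitz (rather than padded) decompositions, and it turns the per-pair separation guarantee of \cref{defn:(Lipschitz-Decomposition)} into a worst-case distance estimate.

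Write $\beta := \beta^{*}(\cX)$. For each metric $(X, \rho) \in \cX$ and each dyadic scale $\Delta_i := 2^i$ with $i = 0, 1, \ldots, \lceil \log_{2} \Delta_{\max} \rceil$, I would sample $m = \Theta(\log n)$ independent $(O(\beta), \Delta_i)$-Lipschitz decompositions $P_i^1, \ldots, P_i^m$ of $X$, which exist by the definition of $\beta^{*}(\cX)$. The label of a point $x$ lists the cluster identifiers $(P_i^j(x))_{i,j}$; since each partition has at most $n$ parts, each identifier needs $O(\log n)$ bits, and the contribution of a single trial across all scales is $O(\log n \cdot \log \Delta_{\max})$ bits. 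The decoder, given $l(x)$ and $l(y)$, finds the smallest index $i^{*}$ for which some $j$ satisfies $P_i^j(x) = P_i^j(y)$, and outputs $2^{i^{*}}$.

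Correctness has two sides. The lower bound $\rho(x, y) \leq 2^{i^{*}}$ is immediate: co-clustering at level $i^{*}$ forces $\rho(x, y) \leq \diam \leq \Delta_{i^{*}}$. For the upper bound, at any scale $2^i \geq 2\beta \rho(x, y)$, a single $(O(\beta), 2^i)$-Lipschitz decomposition separates the pair with probability at most $\beta \rho(x, y)/ 2^i \leq 1/2$, so after $m$ independent trials the probability that \emph{all} of them separate $x,y$ is at most $2^{-m} \leq n^{-c}$ for a suitable constant $c$. A union bound over all $O(n^{2})$ pairs and all $O(\log \Delta_{\max})$ scales then shows that, with high probability over the preprocessing, for \emph{every} pair $(x, y)$ and every scale $2^i \geq 2\beta \rho(x, y)$, at least one trial clusters them together; fixing any realization where this event holds yields a deterministic labeling. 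Consequently $2^{i^{*}} \leq 4\beta \rho(x, y) = O(\beta^{*}(\cX)) \cdot \rho(x, y)$.

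The main technical content is the independent-repetition plus union-bound argument, which converts the per-pair Lipschitz guarantee into a guarantee that holds simultaneously for all labels; this is the place where I would have to be most careful, because the naive counting produces label size $O(\log^{2} n \cdot \log \Delta_{\max})$ and obtaining the cleaner $O(\log n \cdot \log \Delta_{\max})$ stated in the theorem requires a finer encoding, e.g., storing a single hierarchical cluster identifier per scale (with the $\Theta(\log n)$ trials folded into a single representative partition chosen during preprocessing), or identifying clusters across trials so that the same point carries only one identifier per scale.
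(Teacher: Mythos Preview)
Your overall architecture matches the paper: Lipschitz decompositions at every dyadic scale, $\Theta(\log n)$ independent repetitions, a union bound over all $O(n^{2})$ pairs, and a decoder that reads off the finest scale at which the two points are ``together.'' The approximation analysis is correct.

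The gap is exactly where you flag it, and your proposed fixes do not close it. Storing a cluster identifier per repetition per scale costs $O(\log n)$ bits each, giving label size $O(\log^{2} n \cdot \log \Delta_{\max})$. You cannot collapse the $\Theta(\log n)$ repetitions into a single representative partition without destroying the independence that the union bound relies on, and ``identifying clusters across trials'' does not save bits because the partitions are independent and a point's clusters in different trials are unrelated.

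The paper's trick is to replace each cluster identifier by a single \emph{random bit}: in each repetition at each scale, assign every cluster an independent uniform bit, and store only that bit in the label. Two points in the same cluster always receive the same bit; two points in different clusters agree with probability exactly $\tfrac{1}{2}$. Taking $\Delta_i := 4\beta\, 2^{i}$, a pair with $\rho(x,y)\le 2^{i}$ is separated with probability at most $\tfrac{1}{4}$, so their bits agree with probability at least $\tfrac{3}{4}$; a pair with $\rho(x,y) > \Delta_i$ is always separated, so their bits agree with probability exactly $\tfrac{1}{2}$. After $k = O(\log n)$ repetitions, a Chernoff bound separates the $\tfrac{3}{4}$ case from the $\tfrac{1}{2}$ case with failure probability $n^{-\Omega(1)}$; the decoder thresholds the number of agreeing bits at $\tfrac{5}{8}k$ and returns the smallest scale that passes. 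Each repetition now contributes one bit rather than $O(\log n)$, yielding the stated $O(\log n \cdot \log \Delta_{\max})$ label size.
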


It is straightforward to see that \cref{thm:(Distance-Labeling-In-l_p-p_greater-than-2)}
follows by combining \cref{thm:(Distance-Labeling-In-Decomposable-Metrics)}
and \cref{thm:(Lipschitz-Decompositions-in-lp)}.

\begin{proof}[Proof of \cref{thm:(Distance-Labeling-In-Decomposable-Metrics)}]
We first describe the preprocessing algorithm,
denoting $\beta := \beta^*(\cX)$. 
Perform the following steps for all levels \( i = 0, \ldots, \log{\Delta_{\text{max}}} \).
Begin by constructing a \( (\beta, \Delta_i := 4\beta 2^{i}) \)-Lipschitz decomposition,
and observe that every two points \( x, y \in X \) with \( \rho(x,y) \leq 2^{i} \) are separated with probability at most \( \frac{1}{4} \).
Then, assign a random bit to each cluster,
and observe that if two points are at distance greater than \( \Delta_i \),
they always fall in different clusters,
hence, the probability that they are assigned the same bit is exactly \( \frac{1}{2} \), and if they are at distance at most $2^i = \Delta_i/(4\beta)$
they are assigned the same bit with probability at least \( \frac{3}{4} \).
Repeat the last two steps \( k = O(\log{n}) \) times,
and then with high probability, every two points \( x, y \) are assigned
the same bit at least \( \frac{5}{8}k \) times if \( \rho(x, y) \leq \Delta_i/(4\beta) \)
and fewer than \( \frac{5}{8}k \) times if \( \rho(x, y) > \Delta_i \).
Finally, label each point by concatenating the bit assigned to its cluster
in all the repetitions at all levels.

The label-size analysis is straightforward.
It remains to show that, given two labels \(l(x),l(y)\),
it is possible to approximate the distance \(\rho(x, y)\) within factor $O(\beta)$.
This can be achieved by identifying the smallest level $i$
such that $x$ and $y$ are assigned the same bit at least $\frac{5}{8}k$ times,
and then the above analysis (used in contrapositive form) implies that
$\Delta_{i-1}/(4\beta) < \rho(x,y) \leq \Delta_i$,
where by convention $\Delta_{-1}:=1$.
\end{proof}

\section{Future Directions}
\label{sec:conclusion}

\subparagraph*{Lipschitz Decompositions.} 

We stress that our decomposition in \Cref{thm:(Lipschitz-Decompositions-in-lp)}
employs a data-dependent approach, and is not oblivious to the input set $X$
(as, say, the decomposition for $\ell_2$ in \cite{CCGGP98},
even when it applied together with the JL Lemma).
In retrospect, this feature is perhaps not very surprising,
because data-dependent approaches have been already shown to be effective
for central problems, such as nearest neighbor search~\cite{ANNRW18, KNT21}.
We thus mention that a major open problem in the field
is whether dimension reduction is possible in $\ell_{p}$ for $p\neq 1,2,\infty$;
we know that for $p>2$ this is not possible via oblivious mappings~\cite{CS02}, 
raising the question whether data-dependent mappings can overcome this limitation.

Another possible direction is for $1<p<2$,
where the known bound $\beta^{*}_{n}(\ell_{p}) = \frac{O(\log^{1/p}{n})}{p-1}$
is achieved using an oblivious decomposition~\cite{LN03draft, Naor17}.
Motivated by the case $p\to 1$,
a tight bound $\beta^{*}_{n}(\ell_{p}) = \Theta(\log^{1/p}{n})$
is conjectured in~\cite[Conjecture 1]{Naor17}
(see also~\cite[Conjecture 82]{Naor24}).
We believe that a data-dependent decomposition could potentially play a key role
in proving this conjecture.

\subparagraph*{Geometric Spanners.}
The geometric spanners in \cite{HIS13,FN22} for $\ell_p$, $1<p\le 2$, 
are not known to be optimal, i.e., we do not know of matching lower bounds,
except for a more restricted case of $2$-hop spanners \cite{HIS13}. 
We conjecture that tight instances exist in these spaces,
i.e., the spanner bounds obtained in \cite{HIS13, FN22} are optimal for every stretch $t$. 
We similarly do not know of matching lower bounds
for the geometric spanners in $\ell_p$, for fixed $2\le p<\infty$, 
that we obtain in \cref{thm:(Spanner-In-l_p-p-greater-than-2)},
and it is quite plausible that our upper bounds are not tight. 
We do know however, based on known results, that for every $n$,
there exist tight instances in $\ell_p$ for $p=\Omega(\log{n})$.

\ifLIPICS
    \bibliographystyle{alphaurl}
    \bibliography{references.bib}
\else
    {\small
      \bibliographystyle{alphaurl}
      \bibliography{references}
    } %
\fi

\end{document}